\documentclass{amsart}

\usepackage{natbib}

\usepackage[utf8]{inputenc}
\usepackage[T1]{fontenc}
\usepackage{hyperref}       
\usepackage{url}
\usepackage{booktabs}
\usepackage{amsfonts}
\usepackage{nicefrac}
\usepackage{microtype}
\usepackage{xcolor}
\usepackage[text={440pt,575pt},headheight=9pt,centering]{geometry}

\usepackage{graphicx}
\usepackage{amsmath,amssymb,amsthm,mathtools}
\usepackage{mathrsfs}
\usepackage{comment}
\usepackage{algorithm}
\usepackage{algpseudocode}
\usepackage{tikz}
\usepackage[capitalize]{cleveref}  

\hypersetup{colorlinks=true,linkcolor=blue,citecolor=blue,urlcolor=cyan}

\newcommand{\RR}{\ensuremath{\mathbb{R}}}

\newcommand{\one}{\ensuremath{\mathbf{1}}}
\newcommand{\zero}{\ensuremath{\mathbf{0}}}

\newcommand{\diag}{\ensuremath{\mathrm{diag}}}
\newcommand{\tr}{\ensuremath{\operatorname{tr}}}
\newcommand{\supp}{\ensuremath{\operatorname{supp}}}
\newcommand{\MTPtwo}{\ensuremath{\mathrm{MTP}_2}}

\newcommand{\PSD}{\ensuremath{\succeq}}
\newcommand{\NSD}{\ensuremath{\preceq}}

\usepackage{aliascnt}

\newtheorem{theorem}{Theorem}[section]

\newaliascnt{proposition}{theorem}
\newtheorem{proposition}[proposition]{Proposition}
\aliascntresetthe{proposition}

\newaliascnt{lemma}{theorem}

\aliascntresetthe{lemma}

\newaliascnt{corollary}{theorem}
\newtheorem{corollary}[corollary]{Corollary}
\aliascntresetthe{corollary}

\newaliascnt{definition}{theorem}

\aliascntresetthe{definition}

\newaliascnt{example}{theorem}

\aliascntresetthe{example}

\newaliascnt{remark}{theorem}
\newtheorem{remark}[remark]{Remark}
\aliascntresetthe{remark}

\newaliascnt{conjecture}{theorem}

\aliascntresetthe{conjecture}

\crefname{theorem}{theorem}{theorems}
\Crefname{theorem}{Theorem}{Theorems}
\crefname{lemma}{lemma}{lemmas}
\Crefname{lemma}{Lemma}{Lemmas}
\crefname{proposition}{proposition}{propositions}
\Crefname{proposition}{Proposition}{Propositions}
\crefname{definition}{definition}{definitions}
\Crefname{definition}{Definition}{Definitions}
\crefname{remark}{remark}{remarks}
\Crefname{remark}{Remark}{Remarks}
\crefname{corollary}{corollary}{corollaries}
\Crefname{corollary}{Corollary}{Corollaries}
\crefname{example}{example}{examples}
\Crefname{example}{Example}{Examples}
\crefname{conjecture}{conjecture}{conjectures}
\Crefname{conjecture}{Conjecture}{Conjectures}

\title[]{Learning Gaussian Graphical Models under Total Positivity via Spectral Graph Sparsification}

\author{Ignacio Echave-Sustaeta Rodr\'iguez}
    \address{Department of Mathematics and Computer Science, Eindhoven University of Technology, Eindhoven, The Netherlands}
\author{Aida Abiad}
\address{Department of Mathematics and Computer Science, Eindhoven University of Technology, Eindhoven, The Netherlands}
    \author{Frank R\"ottger}
    \address{Department of Applied Mathematics, University of Twente, Enschede, The Netherlands}
\begin{document}

\begin{abstract}

Many practical data analysis tasks reduce to learning, from observed samples, how a collection of variables depend on each other. 
A widely used approach is to fit a Gaussian graphical model, which represents the dependence structure as a graph connecting the variables. 
In a number of important applications, such as financial returns, gene co-expression, and climate or network analysis, the dependencies tend to be \emph{positive}: variables move together rather than offset each other. 
Encoding this positivity through the constraint of multivariate total positivity of order two (\MTPtwo{}) yields an attractive estimator that produces accurate fits with no tuning required. 
The resulting graphs are, however, typically much denser than the underlying ground-truth model, which makes them hard to interpret and slow to use in any downstream task that operates on the graph.  
In this work, we propose a novel highly-scalable approach for learning Gaussian graphical models from data using spectral sparsification; we call it \emph{Spectral-\MTPtwo{}}.
Spectral graph sparsification is a fundamental method which aims to preserve meaningful properties of a dense graph with a sparser subgraph.
We theoretically and empirically investigate and validate our method, and show that learning Gaussian Graphical Models under \MTPtwo{} using spectral sparsification preserves \MTPtwo{} and approximates well the original model in terms of Kullback--Leibler divergence and Gaussian log-likelihood.
In simulations and applications to equity returns and gene expression, we observe that Spectral-\MTPtwo{} retains most of the fit quality of the denser \MTPtwo{} baseline, while producing substantially sparser and more interpretable graphs.
\end{abstract}
\maketitle

\section{Introduction}
Gaussian graphical models are a widely used framework for representing multivariate dependence as a graph,
with edges connecting pairs of variables that interact directly after accounting for the others \citep{lauritzen1996}. In many real-world settings, these dependencies tend to be \emph{positive}: stock returns within a sector rise and fall together with the broader market \citep{AgrawalRoyUhler2022}, co-expressed genes in a biological pathway activate jointly, and neighboring weather measurements are positively correlated. 
Encoding this positivity through the constraint of multivariate total positivity of order two (\MTPtwo{}) yields an attractive estimator: the corresponding maximum-likelihood fit captures positive interactions accurately, requires no tuning parameter, and is supported by a rich and growing theoretical literature \citep{Fallat2017,LauritzenUhlerZwiernik2019,LauritzenUhlerZwiernik2021}. The graphs returned by \MTPtwo{} estimation are, however, typically much denser than the underlying ground-truth, retaining spurious edges that obscure the true network. This limits both interpretability for domain experts and the tractability of downstream tasks (visualization, clustering, graph-based prediction), since many graph-level computations are efficient on sparse inputs but intractable in general.

Spectral graph theory has revolutionized the algorithmic treatment of large and dense graphs over the last two decades. 
The spectral machinery is considered as very strong for applications in theoretical computer science; an instance of it is the work of \citet{ST2004} on spectral sparsification of Laplacian solvers. 
\emph{Graph spectral sparsification} is the approximation of an arbitrary graph by a sparser graph while retaining its essential spectral properties \citep{SpSr2011,BatsonSpielmanSrivastava2012}.
Such approximations have been used to build nearly-linear-time Laplacian solvers \citep{KLP12} and have found applications in graph-based learning, clustering, and dimensionality reduction \citep{BravoHermsdorffGunderson2019,CalandrielloLazaricKoutisValko2018}, and in algorithmic statistics and machine learning more broadly \citep{ChengChengLiuPengTeng2015,WangZhaoFeng2022}. 
To the best of our knowledge, and despite its potential, this spectral machinery has not yet been used in the context of Gaussian graphical model estimation; this is the gap we cover in this work.

In this paper, we propose to use spectral sparsification as a post-estimation step for Gaussian \MTPtwo{} graphical models: we fit a (possibly dense) \MTPtwo{} estimate, sparsify the underlying graph, and re-estimate \MTPtwo{} on the sparser support. We perform the procedure with the deterministic linear-size sparsifier of \citet{BatsonSpielmanSrivastava2012} (BSS), which gives precise control over the edge count. 
We call this estimator \emph{Spectral-\MTPtwo{}}, and show that it preserves the \MTPtwo{} property and approximates its input under standard distributional and information-theoretic measures. 
Simulations and applications to equity returns and gene expression illustrate that Spectral-\MTPtwo{} retains most of the fit quality of the dense \MTPtwo{} baseline while producing substantially sparser graphs that are easier to interpret and to use downstream.

\paragraph{Related work.}
Spectral sparsification was introduced by \citet{ST2004} and given a near-optimal form by \citet{SpSr2011,BatsonSpielmanSrivastava2012}, with subsequent refinements through faster randomized constructions \citep{KLP12,FungHKPR2011} and direct sparsification of strictly diagonally dominant $M$-matrices (SDDM) via approximate Gaussian elimination \citep{KyngSachdeva2016}. 
In statistics and machine learning, this powerful spectral machinery has been used to build nearly-linear-time samplers for Gaussian graphical models with SDDM precision matrices \citep{ChengChengLiuPengTeng2015}, to sparsify graphs prior to Laplacian-regularized regression while preserving statistical validity \citep{SadhanalaWangTibshirani2016}, and to combine ridge spectral sparsification with Laplacian learning for scalable semi-supervised classification \citep{CalandrielloLazaricKoutisValko2018}. 
The closest precedent to our work is that of \citet{WangZhaoFeng2022}, who propose a scalable algorithm for \emph{Laplacian-constrained} Gaussian graphical models that interleaves spectral sparsification and densification to identify the most spectrally critical edges. 
Our setting is strictly more general: an \MTPtwo{} precision matrix is not necessarily a Laplacian nor an SDDM matrix, and we act as a post-estimation step with explicit Loewner-order, log-likelihood, and KL guarantees. On the modeling side, Gaussian \MTPtwo{} graphical models have been studied through $\ell_1$-penalized likelihood \citep{FHT2007,SlawskiHein2015,WangRoyUhler2020,CaiCardosoPalomarYing2023}, the GOLAZO penalty framework of \citet{LZ2022}, adaptive multi-stage estimators \citep{YingCardosoPalomar2023}, and scalable bridge-block decomposition algorithms \citep{WangYingPalomar2023}.

\section{Background}\label{sec:preliminaries}

\subsection{Gaussian graphical models and \MTPtwo{}}\label{sec:gaussian_prelims}
Let $X=(X_1,\ldots,X_d)$ be a Gaussian random vector with positive definite covariance matrix $\Sigma$ and precision matrix $K:=\Sigma^{-1}$. 
We call
\[
\ell(K;S) := \log\det K - \tr(KS)
\]
the normalized Gaussian log-likelihood of $K$ given a sample covariance $S$ obtained from i.i.d.\ samples.
For a simple graph $G=(V,E)$ on $V=\{1,\ldots,d\}$, the Gaussian graphical model associated with $G$ consists of all such distributions that satisfy $K_{ij}=0$ whenever $\{i,j\}\notin E$ \citep{lauritzen1996}.
A distribution on $\RR^d$ with density $f$ is \emph{multivariate totally positive of order two} (\MTPtwo{}) if
\begin{equation}\label{eq:mtp2_density}
f(x)\,f(y) \;\le\; f(x\vee y)\,f(x\wedge y) \qquad \text{for all } x,y\in\RR^d,
\end{equation}
with $\vee$ and $\wedge$ the componentwise maximum and minimum. Condition~\eqref{eq:mtp2_density} is a strong form of positive dependence: it implies nonnegative correlations between any two coordinates and continues to hold after conditioning on any subset of the remaining coordinates. 
For Gaussians it is equivalent to $K$ being a symmetric positive definite $M$-matrix, that is
\begin{equation}\label{eq:mtp2_gaussian}
K_{ij}\le 0 \text{ for all } i\neq j,
\end{equation}
so that all partial correlations are nonnegative~\citep{LauritzenUhlerZwiernik2019,WangRoyUhler2020}.

\subsection{Graphs, Laplacians, and spectral sparsification}\label{sec:graph_prelims}
We consider weighted simple graphs $G=(V,E,c)$ with vertex set $V$, edge set $E$, and symmetric weight function $c:V\times V\to [0,\infty)$. We write $d=|V|$ and $m=|E|$. The weighted adjacency matrix is $A=(c_{ij})$, and the degree of vertex $i$ is
$k_i := \sum_{j=1}^d c_{ij}.$
The Laplacian matrix of $G$ is $L_G := D-A$, where $D=\diag(k_1,\ldots,k_d)$ is the diagonal degree matrix. For a connected weighted graph, $L_G$ is positive semidefinite, satisfies $L_G\one=0$, and has rank $d-1$.

To compare such matrices we use the \emph{Loewner partial order} on the cone of symmetric matrices: for symmetric $A,B\in\RR^{d\times d}$ we write $A\NSD B$ (equivalently $B\PSD A$) when $B-A$ is positive semidefinite, that is, $z^\top(B-A)z\ge 0$ for every $z\in\RR^d$, and reserve $A\succ 0$ for strict positive definiteness. When $A,B\succ 0$, a sandwich $A\NSD B\NSD\kappa A$ for some $\kappa\ge 1$ is equivalent to the multiplicative two-sided bound $z^\top Az\le z^\top Bz\le\kappa\,z^\top Az$ for all $z\in\RR^d$, and also to all eigenvalues of $A^{-1/2}BA^{-1/2}$ lying in $[1,\kappa]$. Graph Laplacians annihilate $\one$, so for $L_G,L_{G'}$ a Laplacian comparison is naturally restricted to the orthogonal complement $\one^\perp$.

A standard notion of approximation for weighted graphs uses this order. For $\varepsilon\in(0,1)$, we say that a weighted graph $G'$ on the same vertex set is an $\varepsilon$-spectral approximation of $G$ if
\begin{equation}\label{eq:spectral_approx_graph}
(1-\varepsilon) z^\top L_G z \le z^\top L_{G'} z \le (1+\varepsilon) z^\top L_G z
\qquad \text{for all } z\in\RR^d,
\end{equation}
equivalently $(1-\varepsilon)L_G \NSD L_{G'} \NSD (1+\varepsilon)L_G$ on $\one^\perp$. For small $\varepsilon$, \eqref{eq:spectral_approx_graph} preserves the nonzero Laplacian eigenvalues up to a factor of $1\pm\varepsilon$ and gives analogous bounds for normalized Laplacian eigenvalues, effective resistances, and the associated Gaussian quadratic forms \citep{SpSr2011}.

Spectral sparsification was introduced by \citet{ST2004}, who showed that every weighted graph admits a sparse spectral approximation obtained in nearly-linear time with a polylogarithmic edge count. The randomized leverage-score sampling construction of \citet{SpSr2011} tightens this to $O(\varepsilon^{-2}d\log d)$ edges with the spectral approximation succeeding with probability at least $1/2$. Subsequent work has developed further nearly-linear-time randomized constructions \citep{KLP12,FungHKPR2011} and sparsification via approximate Gaussian elimination on SDDM matrices \citep{KyngSachdeva2016}. Our method uses the deterministic linear-size construction of \citet{BatsonSpielmanSrivastava2012}, stated below.
\begin{theorem}[Twice-Ramanujan spectral sparsification, \citealp{BatsonSpielmanSrivastava2012}]\label{thm:bss}
Let $L$ be a weighted graph Laplacian on $d$ vertices and fix $\eta>1$. There is a deterministic polynomial-time algorithm that outputs a reweighted subgraph Laplacian $\widetilde L$ supported on at most $\lceil\eta(d-1)\rceil$ edges and satisfying
\[
L \;\NSD\; \widetilde L \;\NSD\; \kappa(\eta)\,L \quad \text{on } \one^\perp,
\qquad
\kappa(\eta) = \Bigl(\tfrac{\sqrt\eta+1}{\sqrt\eta-1}\Bigr)^2.
\]
Equivalently, after rescaling $\widetilde L$ by a constant, $\widetilde L$ is a $(1\pm\varepsilon)$-spectral approximation of $L$ in the sense of \eqref{eq:spectral_approx_graph} with $\varepsilon=(\kappa(\eta)-1)/(\kappa(\eta)+1)$ and $O(d/\varepsilon^2)$ nonzero off-diagonal entries.
\end{theorem}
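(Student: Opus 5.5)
The statement is the BSS theorem, and the plan is to reduce it to a matrix problem about vectors in isotropic position. Write $L=\sum_{e\in E} c_e b_e b_e^\top$, where $b_e=\mathbf{e}_i-\mathbf{e}_j$ for $e=\{i,j\}$. Let $L^{+}$ be the pseudoinverse and set $v_e:=\sqrt{c_e}\,(L^{+})^{1/2}b_e$. These vectors lie in the $(d-1)$-dimensional space $\one^\perp$ and satisfy $\sum_e v_ev_e^\top=I_{\one^\perp}$.

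It then suffices to find nonnegative scalars $s_e$, at most $\lceil\eta(d-1)\rceil$ of them nonzero, with
\[
I\NSD \sum_e s_e v_ev_e^\top\NSD \kappa(\eta)\, I \quad\text{on } \one^\perp .
\]
Conjugating back by $L^{1/2}$ gives the Laplacian $\widetilde L=\sum_e s_ec_e b_eb_e^\top$. It is supported on the chosen edges and satisfies $L\NSD\widetilde L\NSD\kappa(\eta)L$ on $\one^\perp$, because $L^{1/2}L^{+}L^{1/2}$ is the projection onto $\one^\perp$.

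To find the scalars, I would use the barrier-potential argument. Set $n=d-1$ and maintain $A=\sum_e s_ev_ev_e^\top$ together with a lower barrier $l<\lambda_{\min}(A)$ and an upper barrier $u>\lambda_{\max}(A)$. The potentials are $\Phi^u(A)=\tr(uI-A)^{-1}$ and $\Phi_l(A)=\tr(A-lI)^{-1}$. Start from $A=0$, $l_0=-n/\epsilon_L$, $u_0=n/\epsilon_U$. At each step, shift $u$ by $\delta_U$ and $l$ by $\delta_L$, and add one rank-one term $t\,v_ev_e^\top$.

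The core lemma is a Sherman–Morrison computation. The update keeps $\Phi^u$ from increasing provided $1/t\ge U_A(v_e)$, where $U_A(v)$ is an explicit quadratic form in $(u'I-A)^{-1}$ and $(u'I-A)^{-2}$. It keeps $\Phi_l$ from increasing provided $1/t\le L_A(v_e)$, with $L_A$ defined analogously. A suitable $e$ and $t$ exist whenever $\sum_e U_A(v_e)\le\sum_e L_A(v_e)$. Isotropy gives $\sum_e U_A(v_e)\le 1/\delta_U+\epsilon_U$ and $\sum_e L_A(v_e)\ge 1/\delta_L-\epsilon_L$.

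Choosing $\delta_L=1$, $\delta_U=(\sqrt\eta+1)/(\sqrt\eta-1)$, $\epsilon_L=1/\sqrt\eta$, $\epsilon_U=(\sqrt\eta-1)/(\eta+\sqrt\eta)$ makes the inequality $1/\delta_U+\epsilon_U\le 1/\delta_L-\epsilon_L$ hold. After $\lceil\eta n\rceil$ steps the condition number of $A$ is at most $u/l$, and a short computation shows this ratio is at most $\kappa(\eta)$. Rescaling then gives the two-sided bound.

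Determinism and polynomial time follow because each step only requires evaluating $U_A$ and $L_A$ for all $m$ edges. For the $\varepsilon$ form, scale $\widetilde L$ by $2/(1+\kappa)$, which turns $[1,\kappa]$ into $[1-\varepsilon,1+\varepsilon]$ with $\varepsilon=(\kappa-1)/(\kappa+1)$. Since $\kappa(\eta)-1\asymp 1/\sqrt\eta$, we get $\eta=O(\varepsilon^{-2})$ and hence $O(d/\varepsilon^2)$ edges.

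The main obstacle is the one-step lemma, specifically the lower-barrier inequality. It needs the monotonicity facts that moving $l$ up by $\delta_L$ with $1/\delta_L\ge\Phi_l$ keeps $A\succ l'I$, and it needs a second-order trace inequality to bound the Sherman–Morrison denominator. I would follow BSS's argument verbatim there.
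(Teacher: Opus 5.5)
Your proposal is correct and follows the same route as the paper: the paper does not reprove the result but cites Batson, Spielman and Srivastava, and its appendix sketches exactly your reduction to isotropic vectors (its $v_e=\sqrt{c_e}\,\Lambda^{-1/2}Q^\top b_e$ is your $\sqrt{c_e}\,(L^{+})^{1/2}b_e$ up to the isometry $Q$), followed by $\lceil\eta(d-1)\rceil$ two-barrier rank-one updates; your $\delta_L,\delta_U,\epsilon_L,\epsilon_U$ and the rescaling by $2/(1+\kappa)$ are the right choices. Two minor points: the identity $\sum_e v_ev_e^\top=I_{\one^\perp}$ presumes $L$ is connected, an assumption the paper's sketch also makes (otherwise run the argument on the range of $L$, which only lowers the edge count); and the exact relation $\varepsilon=2\sqrt\eta/(\eta+1)\le 2/\sqrt\eta$ gives $\eta\le 4/\varepsilon^2$ for all $\eta>1$, whereas your $\kappa(\eta)-1\asymp 1/\sqrt\eta$ holds only for large $\eta$.
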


The construction is iterative: \citet{BatsonSpielmanSrivastava2012} build $\widetilde L$ over $q=\lceil\eta(d-1)\rceil$ rank-one updates of a partial sum, controlling two barrier potentials on its spectrum at each step. We defer the precise mechanics, together with the per-iteration cost analysis, to \Cref{sec:bss_details}.

\section{Spectral sparsification of Gaussian \MTPtwo{} precision matrices}\label{sec:gaussian}

Our objective is to sparsify the precision matrix of an \MTPtwo{} Gaussian distribution. 
Clearly, \Cref{thm:bss} is not directly applicable, as the precision matrix of a non-degenerate Gaussian distribution cannot be a Laplacian matrix.
Our construction routes the sparsification through an auxiliary Laplacian obtained by rescaling, and then transports the sparse approximation back to the original coordinates. We describe our approach for a generic estimated \MTPtwo{} precision matrix $\widehat K$. In our experiments $\widehat K$ is the unpenalized maximum likelihood estimator (MLE) under \MTPtwo{}, that is
\begin{equation}
    \widehat K := \arg\max\bigl\{\ell(K;S): K\text{ is \MTPtwo{}}\bigr\}, \label{eq:MTP2-MLE}
\end{equation}
but nothing below depends on that choice.

The first step turns $\widehat K$ into an SDDM matrix. We pick a positive diagonal matrix $\Xi$ and form $B := \Xi\widehat K\Xi$, choosing $\Xi$ so that the row-sum vector $u := B\one$ is strictly positive. The existence of such a $\Xi$ is guaranteed by \Cref{thm:positive_row_sum_scaling} below, and a concrete linearly convergent fixed-point iteration is given in \Cref{rem:sinkhorn_iteration}. The rescaled matrix $B$ inherits symmetry, positive definiteness, and the nonpositive off-diagonal sign pattern of $\widehat K$, and the positive row-sum vector $u$ promotes it to an SDDM matrix.

\begin{theorem}[Positive row-sum scaling]\label{thm:positive_row_sum_scaling}
Let $W\in\RR^{d\times d}$ be symmetric and positive definite, and let $u\in\RR_{>0}^d$. Then there exists a positive diagonal matrix $\Xi=\diag(\xi_1,\ldots,\xi_d)$ such that $\Xi W\Xi\one=u$.
\end{theorem}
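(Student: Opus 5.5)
This is the classical variational approach to diagonal scaling of positive definite matrices, due to Marshall--Olkin. Writing $\xi\in\RR_{>0}^d$ for the diagonal of $\Xi$, the condition $\Xi W\Xi\one=u$ reads componentwise $\xi_i (W\xi)_i = u_i$ for all $i$. This is exactly the stationarity condition of the strictly convex-type functional
\[
F(\xi) := \tfrac12\,\xi^\top W\xi - \sum_{i=1}^d u_i\log\xi_i, \qquad \xi\in\RR_{>0}^d.
\]
Indeed, $\partial F/\partial\xi_i = (W\xi)_i - u_i/\xi_i$, so a critical point satisfies $\xi_i(W\xi)_i=u_i$. The plan is therefore to show that $F$ attains a minimum in the open orthant. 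That minimizer is an interior critical point, and $\Xi=\diag(\xi)$ does the job.

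\textbf{Coercivity.} Let $\lambda>0$ be the smallest eigenvalue of $W$, so $\xi^\top W\xi\ge\lambda\|\xi\|^2$. Since $\log\xi_i\le\log\|\xi\|$, we get
\[
F(\xi)\ \ge\ \tfrac{\lambda}{2}\|\xi\|^2 - \Bigl(\sum_i u_i\Bigr)\log\|\xi\|,
\]
which tends to $+\infty$ as $\|\xi\|\to\infty$.

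\textbf{Boundary behaviour.} Consider a sequence approaching a boundary point of the orthant, i.e.\ some $\xi_j\to 0$ with $\|\xi\|$ bounded. Then $-u_j\log\xi_j\to+\infty$ because $u_j>0$. The quadratic term stays $\ge 0$, and each remaining term $-u_i\log\xi_i$ is bounded below by $-u_i\log\|\xi\|$, hence bounded below. So $F\to+\infty$ there as well.

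\textbf{Existence of an interior minimizer.} Combining the two estimates, every sublevel set $\{F\le c\}$ is closed in $\RR^d$ and bounded, and it stays a positive distance away from the boundary of the orthant. Hence it is compact. As $F$ is continuous on the orthant, it attains its minimum at some interior point $\xi^*$, where $\nabla F(\xi^*)=0$. This gives $\xi^*_i(W\xi^*)_i=u_i$ for all $i$, which is the claim.

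The main obstacle is this coercivity/boundary estimate, specifically verifying that sublevel sets avoid the boundary. I would handle it with the bound $\log\xi_i\le\log\|\xi\|$ as above.

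As a remark, $F$ is in fact strictly convex: its Hessian is $W+\diag(u_i/\xi_i^2)\succ0$. This gives uniqueness as a byproduct, though the theorem does not require it.
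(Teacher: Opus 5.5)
Your proof is correct and takes essentially the same route as the paper. Both minimize $\tfrac12\xi^\top W\xi-\sum_i u_i\log\xi_i$ over the open orthant, obtain an interior minimizer from coercivity plus the blow-up of the log barrier at the boundary, and read off $\xi_i(W\xi)_i=u_i$ from the first-order conditions. Your explicit estimate via $\log\xi_i\le\log\|\xi\|$ and the sublevel-set compactness argument are more careful than the paper's one-line coercivity claim, which invokes strict convexity, and hence uniqueness, from the outset.
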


Writing $D := \diag(u)$, the SDDM matrix splits naturally as $B = L_B + D$, where $L_B := B - D$ is a graph Laplacian whose edge set coincides with the sparsity pattern of $\widehat K$. We apply \Cref{thm:bss} to $L_B$ to obtain a sparse Laplacian $\widetilde L_B$, reassemble $\widetilde B := \widetilde L_B + D$, and undo the rescaling to define the sparsified estimator
\begin{equation}
    \widetilde K := \Xi^{-1}\widetilde B\,\Xi^{-1}.\label{eq:sparsified}
\end{equation}
Each step preserves positive definiteness and the nonpositive off-diagonal sign pattern, and carries its own Loewner-order guarantee. Stepwise statements and proofs are collected in \Cref{sec:proofs} (\Cref{prop:B_is_SDDM,prop:LB_is_laplacian,prop:decomp_approx}). Composing them gives the main spectral and distributional guarantees.

\begin{theorem}\label{thm:gaussian_pullback}
Let $\widehat K$ be an \MTPtwo{} precision matrix, $\varepsilon\in(0,1)$, and $\widetilde K$ obtained by the construction above with $\widetilde L_B$ an $\varepsilon$-spectral approximation of $L_B$. Then $\widetilde K$ is a symmetric positive definite $M$-matrix and satisfies
$(1-\varepsilon)\widehat K\NSD\widetilde K\NSD(1+\varepsilon)\widehat K$.
Moreover, since the final rescaling multiplies each off-diagonal entry of $\widetilde B$ by the positive factor $\xi_i^{-1}\xi_j^{-1}$, the matrices $\widetilde K$ and $\widetilde B$ have exactly the same zero pattern, so the sparsification carried out on $\widetilde L_B$ is preserved in $\widetilde K$.
\end{theorem}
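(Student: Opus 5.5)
The plan is to prove the Loewner sandwich in the rescaled coordinates first, then transport it back by congruence with $\Xi^{-1}$. We start from the splitting $B = L_B + D$ with $D=\diag(u)\succ 0$. The $\varepsilon$-spectral approximation hypothesis on $\widetilde L_B$, in the sense of \eqref{eq:spectral_approx_graph}, is stated for all $z\in\RR^d$ rather than only on $\one^\perp$. This is harmless: both $L_B$ and $\widetilde L_B$ are Laplacians and annihilate $\one$, so the inequalities on $\one^\perp$ extend to all of $\RR^d$. Hence
\[
(1-\varepsilon)\,z^\top L_B z \le z^\top \widetilde L_B z \le (1+\varepsilon)\,z^\top L_B z \qquad\text{for all } z .
\]
Adding $z^\top D z$ throughout and using $(1-\varepsilon)z^\top Dz\le z^\top Dz\le (1+\varepsilon)z^\top Dz$, which holds because $D\PSD 0$ and $\varepsilon\in(0,1)$, gives
\[
(1-\varepsilon)B\NSD \widetilde B\NSD(1+\varepsilon)B .
\]
This is the content I expect \Cref{prop:decomp_approx} to record.

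Next I will apply the congruence $M\mapsto \Xi^{-1}M\Xi^{-1}$, which preserves the Loewner order because $\Xi^{-1}$ is invertible and symmetric. By construction $\Xi^{-1}B\Xi^{-1}=\widehat K$ and $\Xi^{-1}\widetilde B\Xi^{-1}=\widetilde K$, so the sandwich $(1-\varepsilon)\widehat K\NSD\widetilde K\NSD(1+\varepsilon)\widehat K$ follows at once. Positive definiteness of $\widetilde K$ then comes for free: $\widehat K\succ0$ and $1-\varepsilon>0$ give $\widetilde K\PSD(1-\varepsilon)\widehat K\succ 0$. Symmetry is clear, since $\widetilde L_B$, $D$ and $\Xi$ are all symmetric.

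For the $M$-matrix and \MTPtwo{} property, only the off-diagonal sign pattern remains. \Cref{thm:bss} produces a reweighted subgraph Laplacian with nonnegative edge weights, so the off-diagonal entries of $\widetilde L_B$ are nonpositive. Adding the diagonal $D$ does not change them. The congruence gives $\widetilde K_{ij}=\xi_i^{-1}\xi_j^{-1}\widetilde B_{ij}$ with $\xi_i,\xi_j>0$, so $\widetilde K_{ij}\le 0$ for $i\ne j$, and by \eqref{eq:mtp2_gaussian} $\widetilde K$ is an \MTPtwo{} precision matrix. The same entrywise formula with a strictly positive factor shows that $\widetilde K_{ij}=0$ exactly when $\widetilde B_{ij}=0$. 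The off-diagonal support of $\widetilde B$ is that of $\widetilde L_B$, so the zero pattern is preserved.

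There is no deep obstacle here; the argument is a short composition. The step needing the most care is the first one, namely making sure the approximation is used on all of $\RR^d$ rather than only on $\one^\perp$. That extension relies on the kernel fact $L_B\one=\widetilde L_B\one=0$. If instead one uses the $L\NSD\widetilde L\NSD\kappa L$ form of \Cref{thm:bss}, one must first perform the constant rescaling mentioned there, because adding $D$ to a one-sided $[1,\kappa]$ sandwich does not by itself give a symmetric $1\pm\varepsilon$ bound.
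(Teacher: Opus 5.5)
Your proposal is correct and follows essentially the same route as the paper, which first establishes $(1-\varepsilon)B\NSD\widetilde B\NSD(1+\varepsilon)B$ in \Cref{prop:decomp_approx} by adding $\varepsilon D$ to the Laplacian sandwich (extended from $\one^\perp$ using the common kernel). It then transports this bound through the congruence $z=\Xi^{-1}y$ and reads off the sign and zero pattern from $\widetilde K_{ij}=\xi_i^{-1}\xi_j^{-1}\widetilde B_{ij}$; the only cosmetic difference is that the paper gets positive definiteness of $\widetilde K$ from $\widetilde B\succ 0$ and congruence, whereas you read it off from $\widetilde K\PSD(1-\varepsilon)\widehat K\succ 0$.
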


\begin{corollary}\label{cor:mtp2_preserved_kl}
Let $\widehat K$ and $\widetilde K$ be as in \Cref{thm:gaussian_pullback}. Then the centered Gaussian with precision matrix $\widetilde K$ is \MTPtwo{}. If in addition $\varepsilon\le\tfrac12$, then
$D_{\mathrm{KL}}\!\bigl(\mathcal{N}(0,\widehat K^{-1})\,\big\|\,\mathcal{N}(0,\widetilde K^{-1})\bigr)\le d\varepsilon^2/2$.
\end{corollary}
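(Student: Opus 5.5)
The plan has two parts, and both rest on \Cref{thm:gaussian_pullback}. The first part, that the Gaussian with precision $\widetilde K$ is \MTPtwo{}, is immediate from that theorem. It gives that $\widetilde K$ is a symmetric positive definite $M$-matrix. Hence $\widetilde K^{-1}$ exists, and $\mathcal N(0,\widetilde K^{-1})$ is a well-defined non-degenerate Gaussian. Its precision matrix has nonpositive off-diagonal entries, so by the characterization \eqref{eq:mtp2_gaussian} the distribution is \MTPtwo{}.

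For the KL bound I will use the closed form for centered Gaussians:
\[
D_{\mathrm{KL}}\bigl(\mathcal N(0,\widehat K^{-1})\,\|\,\mathcal N(0,\widetilde K^{-1})\bigr)=\tfrac12\bigl(\tr(\widetilde K\widehat K^{-1})-d-\log\det(\widetilde K\widehat K^{-1})\bigr).
\]
Set $M:=\widehat K^{-1/2}\widetilde K\widehat K^{-1/2}$, which is symmetric positive definite and has eigenvalues $\lambda_1,\ldots,\lambda_d$. By cyclicity of the trace and multiplicativity of the determinant, the KL divergence equals $\tfrac12\sum_{i=1}^d(\lambda_i-1-\log\lambda_i)$. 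Congruence by $\widehat K^{-1/2}$ preserves the Loewner order, so the sandwich in \Cref{thm:gaussian_pullback} gives $(1-\varepsilon)I\NSD M\NSD(1+\varepsilon)I$. Therefore every $\lambda_i=1+t_i$ with $|t_i|\le\varepsilon\le\tfrac12$.

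It then suffices to prove the scalar inequality $g(t):=t-\log(1+t)\le t^2$ for $|t|\le\tfrac12$. Summing over $i$ gives $\tfrac12\sum_i t_i^2\le d\varepsilon^2/2$. This scalar step is the only place needing care.

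For the scalar inequality:
\begin{itemize}
\item $g(0)=0$ and $g'(t)=t/(1+t)$.
\item Let $h(t)=t^2-g(t)$. Then $h(0)=0$ and $h'(t)=2t-t/(1+t)=t(1+2t)/(1+t)$.
\item For $t\in[0,\tfrac12]$ we have $h'\ge0$, so $h\ge0$.
\item For $t\in[-\tfrac12,0]$ we have $t\le0$ and $1+2t\ge0$, so $h'\le0$. Since $h$ is decreasing up to $0$, $h(t)\ge h(0)=0$.
\end{itemize}
Hence $g(t)\le t^2$ on $[-\tfrac12,\tfrac12]$, which is exactly why the hypothesis $\varepsilon\le\tfrac12$ is needed.

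This yields $D_{\mathrm{KL}}\le\tfrac12\sum_i t_i^2\le d\varepsilon^2/2$.
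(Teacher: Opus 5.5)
Your proposal is correct and follows the paper's proof essentially step for step. Both arguments get \MTPtwo{} from the $M$-matrix property of $\widetilde K$, write the centered-Gaussian KL as $\tfrac12\sum_i(\lambda_i-1-\log\lambda_i)$ with eigenvalues confined to $[1-\varepsilon,1+\varepsilon]$ by the Loewner sandwich, and finish with $t-\log(1+t)\le t^2$ for $|t|\le\tfrac12$. Working with the symmetric matrix $\widehat K^{-1/2}\widetilde K\widehat K^{-1/2}$ instead of the similar matrix $\widetilde K\widehat K^{-1}$, and verifying the scalar inequality by differentiation where the paper only asserts it, are small additions of rigor.
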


\Cref{alg:bss_refit} collects the resulting pipeline. The dominant cost is the deterministic BSS step in line 4, which runs in polynomial time via the barrier-function construction of \citet{BatsonSpielmanSrivastava2012} and produces a $(1\pm\varepsilon)$-spectral approximation of $L_B$ with $O(d/\varepsilon^2)$ off-diagonal entries (\Cref{thm:bss}).

\begin{algorithm}[ht]
\caption{Spectral-\MTPtwo{}: BSS sparsification of an \MTPtwo{} precision matrix}
\label{alg:bss_refit}
\begin{algorithmic}[1]
\Require Estimated \MTPtwo{} precision matrix $\widehat K\in\RR^{d\times d}$; BSS parameter $\eta>1$; covariance $S\in\RR^{d\times d}$ for the optional MLE refit
\State \textbf{SDDM scaling:} compute a positive diagonal $\Xi$ with $\Xi\widehat K\Xi\one=u>0$ (\Cref{rem:sinkhorn_iteration}); set $B\gets\Xi\widehat K\Xi$
\State \textbf{Decomposition:} $L_B\gets B-\diag(u)$ (graph Laplacian)
\State \textbf{BSS sparsification:} run the deterministic algorithm of \Cref{thm:bss} on $L_B$ at parameter $\eta$ to obtain $\widetilde L_B$ with $\lceil\eta(d-1)\rceil$ edges and $L_B\NSD\widetilde L_B\NSD\kappa(\eta)L_B$ on $\one^\perp$; rescale to recover the symmetric $(1\pm\varepsilon)$ form with $\varepsilon=(\kappa(\eta)-1)/(\kappa(\eta)+1)$
\State \textbf{Reassemble:} $\widetilde B\gets\widetilde L_B+\diag(u)$
\State \textbf{Undo the rescaling:} set $\widetilde K\gets\Xi^{-1}\widetilde B\,\Xi^{-1}$
\State \textbf{MLE refit (optional):} \Return $\widetilde K^{\mathrm{MLE}}\gets\arg\max\{\ell(K;S):K\text{ is \MTPtwo{}},\ \supp(K)\subseteq\supp(\widetilde K)\}$
\end{algorithmic}
\end{algorithm}

\subsection{Perturbation of the Gaussian log-likelihood}\label{sec:gaussian_likelihood}

The Loewner-order guarantee of \Cref{thm:gaussian_pullback} directly controls the Gaussian log-likelihood. 
Let $T\in\RR^{d\times d}$ be a positive semidefinite matrix that may be different from the (training) sample covariance~$S$ used for estimating $\widehat K$ as in \eqref{eq:MTP2-MLE} and $\widetilde K$ in \eqref{eq:sparsified}.
For example, $T$ could be the sample covariance of a test data set in a validation setup.
The next theorem expresses the gap $\ell(\widetilde K;T)-\ell(\widehat K;T)$ as a Kullback--Leibler divergence plus a trace term that vanishes when $T$ equals the ground-truth covariance $\widehat K^{-1}$.

\begin{theorem}[Bregman identity for the log-likelihood gap]\label{thm:gaussian_loglik}
Let $\widehat K$ and $\widetilde K$ be as in \Cref{thm:gaussian_pullback}. For every symmetric positive semidefinite $T\in\RR^{d\times d}$,
\begin{equation}\label{eq:loglik_identity}
\ell(\widetilde K;T) - \ell(\widehat K;T)
\;=\;
-2\,D_{\mathrm{KL}}\!\bigl(\mathcal{N}(0,\widehat K^{-1})\,\big\|\,\mathcal{N}(0,\widetilde K^{-1})\bigr)
\;-\;\tr\!\bigl((\widetilde K-\widehat K)(T-\widehat K^{-1})\bigr).
\end{equation}
\end{theorem}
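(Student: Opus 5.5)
This is a direct algebraic identity, so the plan is to expand both sides and match terms. The only input from \Cref{thm:gaussian_pullback} is that $\widehat K$ and $\widetilde K$ are symmetric positive definite. Then $\log\det$ is defined for both, and both Gaussians are non-degenerate, so the KL divergence is finite. The sparsity pattern and the $M$-matrix structure play no role; $T$ only needs to be symmetric.

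The first step is to recall the closed form of the Gaussian KL divergence between centered normals with covariances $\Sigma_0=\widehat K^{-1}$ and $\Sigma_1=\widetilde K^{-1}$:
\[
D_{\mathrm{KL}}\bigl(\mathcal N(0,\Sigma_0)\,\|\,\mathcal N(0,\Sigma_1)\bigr)=\tfrac12\Bigl(\tr(\Sigma_1^{-1}\Sigma_0)-d+\log\tfrac{\det\Sigma_1}{\det\Sigma_0}\Bigr).
\]
Substituting the precisions gives
\[
2D_{\mathrm{KL}}=\tr(\widetilde K\widehat K^{-1})-d-\log\det\widetilde K+\log\det\widehat K .
\]
In other words, $2D_{\mathrm{KL}}$ is the Bregman divergence of $-\log\det$ at $\widetilde K$ relative to $\widehat K$. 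This explains the name of the theorem.

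The second step is to expand the left-hand side directly:
\[
\ell(\widetilde K;T)-\ell(\widehat K;T)=\log\det\widetilde K-\log\det\widehat K-\tr\bigl((\widetilde K-\widehat K)T\bigr).
\]
The third step is to add and subtract $\tr\bigl((\widetilde K-\widehat K)\widehat K^{-1}\bigr)=\tr(\widetilde K\widehat K^{-1})-d$. This rewrites the trace term as
\[
-\tr\bigl((\widetilde K-\widehat K)(T-\widehat K^{-1})\bigr)-\bigl(\tr(\widetilde K\widehat K^{-1})-d\bigr).
\]
Combining with the log-determinant difference, the left-hand side becomes $-\bigl[\tr(\widetilde K\widehat K^{-1})-d-\log\det\widetilde K+\log\det\widehat K\bigr]$ plus the cross trace term. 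The bracket is exactly $2D_{\mathrm{KL}}$ by the first step, which gives \eqref{eq:loglik_identity}.

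There is no real obstacle here. The only step needing care is the sign and orientation convention of the KL divergence: the reference measure is $\mathcal N(0,\widehat K^{-1})$, and the trace term $\tr(\Sigma_1^{-1}\Sigma_0)$ must come out as $\tr(\widetilde K\widehat K^{-1})$ rather than its reverse. I would double-check this by specializing to $T=\widehat K^{-1}$. In that case the identity reads $\ell(\widetilde K;\widehat K^{-1})-\ell(\widehat K;\widehat K^{-1})=-2D_{\mathrm{KL}}\le 0$, which is consistent with $\widehat K$ maximizing $\ell(\cdot;\widehat K^{-1})$.
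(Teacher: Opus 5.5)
Your proposal is correct and matches the paper's proof essentially line by line. Both use the closed-form Gaussian KL divergence to write $\log\det\widetilde K-\log\det\widehat K=\tr(\widetilde K\widehat K^{-1})-d-2D_{\mathrm{KL}}$, and then absorb $\tr(\widetilde K\widehat K^{-1})-d=\tr\bigl((\widetilde K-\widehat K)\widehat K^{-1}\bigr)$ into the cross trace term, with only positive definiteness of $\widehat K$ and $\widetilde K$ actually being used.
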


The two terms on the right-hand side play distinct roles. The first is \emph{non-positive}: any sparsifier $\widetilde K\neq\widehat K$ pays a structural cost equal to the KL divergence between the two implied Gaussians, and never gains from this term alone.

To turn the identity into a quantitative bound, we control each term in turn. The first term is non-positive: $-2D_{\mathrm{KL}}\le 0$ for any $\widetilde K$. For the second term, the Loewner bound $-\varepsilon\widehat K\NSD\widetilde K-\widehat K\NSD\varepsilon\widehat K$ together with operator--trace duality give, for every $\varepsilon\in(0,1)$,
\[
\bigl|\tr\!\bigl((\widetilde K-\widehat K)(T-\widehat K^{-1})\bigr)\bigr|
\;\le\;\varepsilon\,R(T),
\qquad
R(T):=\bigl\|\widehat K^{1/2}(T-\widehat K^{-1})\widehat K^{1/2}\bigr\|_*,
\]
where $\|\cdot\|_*$ is the trace (Schatten--1) norm. The quantity $R(T)$ measures the residual between $T$ and $\widehat K^{-1}$ after a rescaling depending on $\widehat K$. The sign structure of the identity then yields an \emph{asymmetric} two-sided bound. The upper side, which only drops the non-positive KL term, holds for every $\varepsilon\in(0,1)$:
\begin{equation}\label{eq:loglik_upper}
\ell(\widetilde K;T)-\ell(\widehat K;T)\;\le\;\varepsilon\,R(T).
\end{equation}
For the lower side we additionally use Corollary~\ref{cor:mtp2_preserved_kl}: provided $\varepsilon\le\tfrac12$, the bound $0\le 2D_{\mathrm{KL}}\le d\varepsilon^2$ controls the KL term, and combining with the bound above for the trace term we get
\begin{equation}\label{eq:loglik_bound}
-d\varepsilon^2 - \varepsilon\,R(T)\;\le\;\ell(\widetilde K;T)-\ell(\widehat K;T)\;\le\;\varepsilon\,R(T)
\qquad(\varepsilon\le\tfrac12).
\end{equation}
The sparsifier can therefore lose at most $d\varepsilon^2+\varepsilon R(T)$ but gain at most $\varepsilon R(T)$.

These bounds admit a sharper interpretation in the two natural regimes of practical interest, namely when $T=S$ is the training covariance used to fit $\widehat K$ and when $T$ is independent of $\widehat K$ (e.g.\ the sample covariance of test data). In particular, $\widetilde K$ provably cannot surpass $\widehat K$ on training data but can outperform it out of sample whenever a precise inner-product condition is satisfied. We defer this discussion to \Cref{sec:regimes}.

The same Loewner argument also yields a non-asymptotic Frobenius decomposition of the parameter error of $\widetilde K$ that translates consistency of $\widehat K$ into consistency of $\widetilde K$ at the same rate when $\varepsilon$ shrinks fast enough; we defer the precise statement to \Cref{prop:gaussian_consistency} in the appendix.

Let $E_{\mathrm{sp}}:=\bigl\{\{i,j\}:\,i,j\in[d],\,i\neq j,\,\widetilde K_{ij}\neq 0\bigr\}$ denote the edge set of $\widetilde K$. Because $\widetilde K$ is \MTPtwo{} and supported on $E_{\mathrm{sp}}$, refitting the \MTPtwo{}-MLE on this edge set provides an improvement in terms of Gaussian log-likelihood given the sample covariance $S$.

\begin{corollary}[Log-likelihood guarantee after MLE refit]\label{cor:loglik_refit}
Let $\widetilde K$ be as in \Cref{thm:gaussian_pullback} and $E_{\mathrm{sp}}$ its edge set as above. Define
\[
\widetilde K^{\mathrm{MLE}} := \arg\max\bigl\{\ell(K;S): K\text{ is \MTPtwo{}},\; K_{ij}=0\text{ for all }i\ne j\text{ with }\{i,j\}\notin E_{\mathrm{sp}}\bigr\}.
\]
Then for every $\varepsilon\in(0,1)$,
\begin{equation}\label{eq:refit_pythagoras}
\ell(\widetilde K^{\mathrm{MLE}};S) - \ell(\widetilde K;S)
\;\ge\;
2\,D_{\mathrm{KL}}\!\bigl(\mathcal{N}(0,(\widetilde K^{\mathrm{MLE}})^{-1})\,\big\|\,\mathcal{N}(0,\widetilde K^{-1})\bigr)
\;\ge\; 0,
\end{equation}
and, provided $\varepsilon\le\tfrac12$,
\[
\ell(\widetilde K^{\mathrm{MLE}};S) - \ell(\widehat K;S)
\;\ge\;
-d\varepsilon^2 - \varepsilon\,R(S)
\;+\;
2\,D_{\mathrm{KL}}\!\bigl(\mathcal{N}(0,(\widetilde K^{\mathrm{MLE}})^{-1})\,\big\|\,\mathcal{N}(0,\widetilde K^{-1})\bigr).
\]
\end{corollary}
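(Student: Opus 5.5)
The plan is to prove the Pythagorean-type inequality \eqref{eq:refit_pythagoras} from the first-order optimality of $\widetilde K^{\mathrm{MLE}}$ over a convex cone, and then chain it with the lower bound \eqref{eq:loglik_bound} for $T=S$.

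\textbf{Step 1: the feasible set.} Let $\mathcal C:=\{K\succ 0:\ K_{ij}\le 0\ (i\ne j),\ K_{ij}=0\ \text{for}\ \{i,j\}\notin E_{\mathrm{sp}}\}$. This is a convex cone, and $\widetilde K^{\mathrm{MLE}}$ is its maximizer of the strictly concave function $\ell(\cdot;S)$. By \Cref{thm:gaussian_pullback}, $\widetilde K$ is a positive definite $M$-matrix supported on $E_{\mathrm{sp}}$, so $\widetilde K\in\mathcal C$. In particular the refit is feasible, and I will assume the maximizer exists, as is implicit in the statement.

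\textbf{Step 2: the Bregman identity.} Write $\widetilde K^{\mathrm{MLE}}=:M$. The identity behind \Cref{thm:gaussian_loglik} holds for any two positive definite matrices. Applied to the pair $(M,\widetilde K)$ with $T=S$, it gives
\[
\ell(\widetilde K;S)-\ell(M;S)=-2D_{\mathrm{KL}}\bigl(\mathcal N(0,M^{-1})\,\|\,\mathcal N(0,\widetilde K^{-1})\bigr)-\tr\bigl((\widetilde K-M)(S-M^{-1})\bigr).
\]
This is just the expansion
\[
\log\det\widetilde K-\log\det M-\tr(\widetilde K S)+\tr(MS)
\]
combined with $2D_{\mathrm{KL}}=\tr(\widetilde K M^{-1})-d-\log\det(\widetilde K M^{-1})$. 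I would verify it directly in a line rather than cite \Cref{thm:gaussian_loglik}, which is stated only for the pair $(\widehat K,\widetilde K)$.

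\textbf{Step 3: the variational inequality (main step).} It remains to show
\[
\tr\bigl((\widetilde K-M)(S-M^{-1})\bigr)\ge 0.
\]
The gradient of $\ell$ at $M$ is $M^{-1}-S$. Since $\mathcal C$ is convex and contains $\widetilde K$, the segment $M+t(\widetilde K-M)$ with $t\in[0,1]$ stays in $\mathcal C$. Optimality of $M$ gives a nonpositive one-sided directional derivative:
\[
\tr\bigl((M^{-1}-S)(\widetilde K-M)\bigr)\le 0.
\]
This is exactly the needed sign. Rearranging then yields
\[
\ell(M;S)-\ell(\widetilde K;S)=2D_{\mathrm{KL}}+\tr\bigl((\widetilde K-M)(S-M^{-1})\bigr)\ge 2D_{\mathrm{KL}}\ge 0,
\]
which is \eqref{eq:refit_pythagoras}. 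The only delicate point is that the directional derivative exists and that the segment stays positive definite. Both hold because positive definite matrices form an open convex set and $\mathcal C$ is convex.

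\textbf{Step 4: the second bound.} Write
\[
\ell(M;S)-\ell(\widehat K;S)=\bigl[\ell(M;S)-\ell(\widetilde K;S)\bigr]+\bigl[\ell(\widetilde K;S)-\ell(\widehat K;S)\bigr].
\]
For $\varepsilon\le\tfrac12$, bound the first bracket below by Step 3 and the second below by the left side of \eqref{eq:loglik_bound} with $T=S$. This gives the stated inequality.
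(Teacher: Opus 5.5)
Your proposal is correct and follows the paper's proof essentially step for step. The paper writes your Step~2 identity as an exact Taylor expansion of $-\ell(\cdot;S)$ whose Bregman remainder for $f=-\log\det$ equals $2D_{\mathrm{KL}}$. It then uses the same first-order variational inequality at $\widetilde K^{\mathrm{MLE}}$ over the convex feasible set containing $\widetilde K$, and chains the result with the lower bound in \eqref{eq:loglik_bound} at $T=S$.
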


The refit therefore not only matches but generically improves \emph{strictly} on $\widetilde K$, by an amount at least equal to the KL divergence from the refit-implied to the BSS-implied distribution. Using the sparsified subgraph followed by an \MTPtwo{}-MLE refit is therefore the natural practical recipe.

\section{Simulation study}\label{sec:simulations}

Throughout this section we operate within the positive-dependence Gaussian regime in which our methodology is well posed: the population precision matrix is required to be a symmetric positive definite $M$-matrix (\Cref{eq:mtp2_gaussian}), so that all partial correlations are nonnegative. The simulations illustrate the trade-off that spectral sparsification is meant to address: the \MTPtwo{}-MLE delivers a high-quality likelihood fit but tends to over-connect the graph (including many edges in the graph that are not in the ground-truth), while $\ell_1$-penalized methods such as the Graphical Lasso are sparser but sacrifice fit quality. Spectral-\MTPtwo{} starts from the \MTPtwo{} fit and removes the least structurally important edges, aiming to retain as much of the log-likelihood as possible while drastically reducing the edge count.

We compare three estimators. The first is the Graphical Lasso (GLasso) of \citet{FHT2007}, which serves as our $\ell_1$-penalized baseline. The second is the unpenalized \MTPtwo{}-MLE \citep{LauritzenUhlerZwiernik2019,LZ2022}, which serves as our positively dependent baseline. The third is Spectral-\MTPtwo{} (\Cref{alg:bss_refit}), which runs the deterministic BSS sparsifier on the auxiliary Laplacian $L_B$ at parameter $\eta$ and refits the \MTPtwo{}-MLE on the resulting subgraph. Tuning parameters ($\lambda$ for GLasso, $\eta$ for Spectral-\MTPtwo{}) are selected by BIC on the training data. All \MTPtwo{} fits use the \texttt{golazo} R package \citep{LZ2022}, and the code is included in the supplementary material.

Throughout this section we summarize each fit by five scalar quantities.
In order to assess predictive fit we use the \emph{test log-likelihood} $\ell(\widehat K;S_{\mathrm{test}})=\log\det\widehat K-\tr(\widehat K S_{\mathrm{test}})$, evaluated on the held-out sample covariance $S_{\mathrm{test}}$ and measured in \emph{nats} (units of natural logarithm $\log_e$). The \emph{edge count} is the number of off-diagonal entries of $\widehat K$ whose absolute value exceeds a small numerical tolerance, included to discount the residuals the GOLAZO solver leaves on entries it is trying to zero out. As a result, the reported count may differ from the algorithm's intended edge set by a few entries with magnitude near this threshold. For graph recovery against the known true edge set $E^\star$, write $\mathrm{TP}$, $\mathrm{FP}$, $\mathrm{FN}$ for the numbers of estimated edges that are correct, spurious, and missed, respectively. Then \emph{precision} $P=\mathrm{TP}/(\mathrm{TP}+\mathrm{FP})$ is the fraction of estimated edges that lie in $E^\star$, \emph{recall} $R=\mathrm{TP}/(\mathrm{TP}+\mathrm{FN})$ is the fraction of true edges recovered, and the \emph{F1 score} $2PR/(P+R)$ is their harmonic mean. Higher precision means fewer spurious edges, higher recall means fewer missed edges, and F1 balances the two.

\subsection{Scale-free graph simulation}\label{sec:sim_gaussian}

The true precision matrix $K^\star$ is the Laplacian of a Barab\'asi--Albert scale-free graph on $d=150$ nodes with attachment parameter $m=2$ ($297$ true edges, heavy-tailed degrees with maximum $17$), plus a scaled identity matrix $\delta I_d$ ($\delta=0.5$) to ensure positive definiteness. The resulting $K^\star$ is an $M$-matrix and the model is \MTPtwo{}. Hubs create many triangles and the leverage distribution is highly skewed, which stresses both the randomized and the deterministic sparsifiers. We draw $n_{\mathrm{train}}=400$ training and $n_{\mathrm{test}}=2{,}000$ test observations over $B=15$ replications, and evaluate each method by test log-likelihood, edge count, and F1 against the true edge set.

\Cref{tab:sim_large} reports the comparison. The \MTPtwo{}-MLE recovers a graph that is much denser than the truth (roughly five times the true edge count), with high recall but very low precision. GLasso loses substantial log-likelihood relative to the \MTPtwo{}-MLE.
Spectral-\MTPtwo{} attains the highest test log-likelihood of the three methods while using less than a quarter of the \MTPtwo{}-MLE's edges, and BIC selects an interior value of $\eta$ on the path.

\begin{table}[ht]
\centering
\footnotesize
\begin{tabular}{lrrrrr}
\toprule
Method & Edges & Test log-lik.\ & Precision & Recall & F1 \\
\midrule
GLasso                   & $409.4_{(8.7)}$    & $21{,}803_{(131)}$ & $0.405_{(0.006)}$ & $0.556_{(0.005)}$ & $0.468_{(0.004)}$ \\
\MTPtwo{}                & $1{,}436.8_{(5.9)}$ & $27{,}622_{(135)}$ & $0.198_{(0.001)}$ & $\mathbf{0.956}_{(0.003)}$ & $0.327_{(0.001)}$ \\
Spectral-\MTPtwo{} (BIC) & $327.3_{(2.3)}$    & $\mathbf{28{,}198}_{(144)}$ & $\mathbf{0.745}_{(0.005)}$ & $0.821_{(0.004)}$ & $\mathbf{0.781}_{(0.003)}$ \\
\bottomrule
\end{tabular}
\caption{Scale-free simulation ($d=150$, Barab\'asi--Albert $m=2$, $n_{\mathrm{train}}=400$, $B=15$ replications, true edges $=297$). Means across replications with standard errors in parentheses. Bold: best on each metric.}
\label{tab:sim_large}
\end{table}

\Cref{tab:sim_large_paths} summarizes the Spectral-\MTPtwo{} path as the BSS parameter $\eta$ varies. Each row reports the corresponding spectral approximation factor $\varepsilon=(\kappa(\eta)-1)/(\kappa(\eta)+1)$ alongside the empirical edge count, test log-likelihood, and F1. Smaller $\eta$ corresponds to a sparser subgraph at a looser certificate. The BIC-selected fit (\Cref{tab:sim_large}) sits at $\sim$$327$ edges, between $\eta=2.5$ and $\eta=3$.

\begin{table}[ht]
\centering
\footnotesize
\begin{tabular}{rrrrr}
\toprule
$\eta$ & $\varepsilon$ & Edges & Test log-lik. & F1 \\
\midrule
$1.50$ & $0.98$ & $223.9_{(0.1)}$ & $27{,}890_{(124)}$ & $0.815_{(0.004)}$ \\
$1.75$ & $0.96$ & $260.3_{(0.3)}$ & $28{,}392_{(141)}$ & $\mathbf{0.827}_{(0.004)}$ \\
$2.00$ & $0.94$ & $290.5_{(0.7)}$ & $\mathbf{28{,}397}_{(149)}$ & $0.812_{(0.004)}$ \\
$2.50$ & $0.90$ & $321.1_{(1.5)}$ & $28{,}216_{(144)}$ & $0.786_{(0.003)}$ \\
$3.00$ & $0.87$ & $330.7_{(1.4)}$ & $28{,}175_{(142)}$ & $0.778_{(0.004)}$ \\
$3.50$ & $0.83$ & $339.2_{(1.0)}$ & $28{,}140_{(146)}$ & $0.770_{(0.004)}$ \\
$4.00$ & $0.80$ & $375.0_{(1.6)}$ & $28{,}012_{(140)}$ & $0.742_{(0.004)}$ \\
$5.00$ & $0.75$ & $417.8_{(2.3)}$ & $27{,}878_{(145)}$ & $0.709_{(0.005)}$ \\
$6.00$ & $0.70$ & $447.0_{(2.2)}$ & $27{,}817_{(145)}$ & $0.688_{(0.004)}$ \\
$8.00$ & $0.63$ & $502.5_{(2.6)}$ & $27{,}701_{(138)}$ & $0.649_{(0.004)}$ \\
\bottomrule
\end{tabular}
\caption{Scale-free simulation ($d=150$, $n_{\mathrm{train}}=400$, $B=15$ replications, true edges $=297$): Spectral-\MTPtwo{} path over the BSS tuning grid. Means with standard errors in parentheses. Bold: best test log-likelihood and best F1 along the path. \MTPtwo{}-MLE baseline ($1{,}437$ edges, log-lik.\ $27{,}622$) and GLasso baseline ($409$ edges, log-lik.\ $21{,}803$) as in \Cref{tab:sim_large}.}
\label{tab:sim_large_paths}
\end{table}

\subsection{Larger scale-free simulation ($d=300$)}\label{sec:sim_xlarge}

We repeat the experiment on a Barab\'asi--Albert graph with $d=300$ nodes and $m=2$ ($597$ true edges), otherwise mirroring \Cref{sec:sim_gaussian} with $n_{\mathrm{train}}=800$, $n_{\mathrm{test}}=3{,}000$, and $10$ replications. Edge weights are drawn uniformly from $[0.5,1.5]$ to break ties between edges of equal nominal weight.

\begin{table}[ht]
\centering
\footnotesize
\begin{tabular}{lrrrrr}
\toprule
Method & Edges & Test log-lik.\ & Precision & Recall & F1 \\
\midrule
GLasso                   & $405.8_{(4.1)}$       & $60{,}242_{(230)}$ & $\mathbf{0.733}_{(0.006)}$ & $0.498_{(0.003)}$ & $0.593_{(0.003)}$ \\
\MTPtwo{}                & $4{,}116.2_{(15.5)}$  & $79{,}061_{(259)}$ & $0.138_{(0.001)}$         & $\mathbf{0.953}_{(0.002)}$ & $0.241_{(0.001)}$ \\
Spectral-\MTPtwo{} (BIC) & $737.3_{(6.1)}$       & $\mathbf{80{,}825}_{(310)}$ & $0.694_{(0.006)}$ & $0.856_{(0.003)}$ & $\mathbf{0.766}_{(0.004)}$ \\
\bottomrule
\end{tabular}
\caption{Scale-free simulation at larger dimension ($d=300$, Barab\'asi--Albert $m=2$, $n_{\mathrm{train}}=800$, $n_{\mathrm{test}}=3{,}000$, 10 replications, true edges $=597$). Means across replications with standard errors in parentheses. Bold: best on each metric.}
\label{tab:sim_xlarge}
\end{table}

The qualitative pattern at $d=150$ persists in higher dimension: \MTPtwo{}-MLE recovers a graph nearly $7$ times the true edge count (recall $0.95$ but precision $0.14$), GLasso uses an order of magnitude fewer edges than \MTPtwo{} but its test log-likelihood lags by roughly $18{,}800$ nats (about $6.3$ nats per held-out observation, with $n_{\mathrm{test}}=3{,}000$), and Spectral-\MTPtwo{} preserves nearly all of the \MTPtwo{}-MLE log-likelihood at roughly $18\%$ of its edges and with the highest F1 of the three methods. As at $d=150$, BIC selects an interior point of the BSS path (here $\eta\approx 4$, $737$ edges); the path peak in test log-likelihood lies slightly earlier at $\eta=2$ ($568$ edges, log-lik.\ $82{,}059$, F1 $0.85$). \Cref{tab:sim_xlarge_paths} shows the full path.

\begin{table}[ht]
\centering
\footnotesize
\begin{tabular}{rrrrr}
\toprule
$\eta$ & $\varepsilon$ & Edges & Test log-lik. & F1 \\
\midrule
$1.5$ & $0.98$ & $458.8_{(2.5)}$  & $80{,}599_{(236)}$ & $0.832_{(0.003)}$ \\
$2.0$ & $0.94$ & $568.3_{(2.6)}$  & $\mathbf{82{,}059}_{(284)}$ & $\mathbf{0.848}_{(0.004)}$ \\
$2.5$ & $0.90$ & $602.9_{(2.5)}$  & $81{,}819_{(293)}$ & $0.833_{(0.003)}$ \\
$3.0$ & $0.87$ & $615.0_{(1.9)}$  & $81{,}759_{(310)}$ & $0.828_{(0.003)}$ \\
$4.0$ & $0.80$ & $731.8_{(3.6)}$  & $80{,}853_{(301)}$ & $0.769_{(0.004)}$ \\
$5.0$ & $0.75$ & $805.9_{(4.9)}$  & $80{,}430_{(287)}$ & $0.736_{(0.004)}$ \\
$6.0$ & $0.70$ & $893.0_{(4.6)}$  & $80{,}025_{(278)}$ & $0.700_{(0.003)}$ \\
$8.0$ & $0.63$ & $1{,}027.2_{(4.7)}$ & $79{,}507_{(261)}$ & $0.649_{(0.002)}$ \\
\bottomrule
\end{tabular}
\caption{Scale-free simulation ($d=300$, $n_{\mathrm{train}}=800$, $B=10$ replications, true edges $=597$): Spectral-\MTPtwo{} path over the BSS tuning grid. Means with standard errors in parentheses. Bold: best test log-likelihood and best F1 along the path. \MTPtwo{}-MLE baseline ($4{,}116$ edges, log-lik.\ $79{,}061$) and GLasso baseline ($406$ edges, log-lik.\ $60{,}242$) as in \Cref{tab:sim_xlarge}.}
\label{tab:sim_xlarge_paths}
\end{table}

\section{Experiments with real data}\label{sec:realdata}

We empirically evaluate and validate the estimators of \Cref{sec:simulations} on two datasets that are standard benchmarks in the graphical models literature and that cover complementary regimes: a long financial time series where $n\gg d$ (S\&P~500 tech/communications/discretionary stocks) and a cancer gene expression dataset where $n$ and $d$ are of the same order (GSE14333 colorectal cancer tumors). All fits use the same tuning grids and BIC selection rule as in the simulation. Code for reproducing the experiments is available in the supplementary material.

\subsection{S\&P~500 tech--communications--discretionary block ($d=141$)}\label{sec:sp500}

Following the stock-market benchmarks for Gaussian \MTPtwo{} models used by \citet{LauritzenUhlerZwiernik2019}, we take daily log-returns of S\&P~500 constituents in the technology, communications, and consumer-discretionary sectors over 2015--2024 ($n=2{,}514$ trading days, downloaded via \texttt{quantmod}). After dropping tickers with incomplete histories or download failures, $d=141$ stocks remain. Restricting to sectors with shared macro drivers makes the \MTPtwo{} assumption more plausible than on the full index, since common exposure to technology-cycle and consumer-demand factors induces positive dependence. We use a chronological 70/30 split ($n_{\mathrm{train}}=1{,}759$, $n_{\mathrm{test}}=755$).

\begin{table}[ht]
\centering
\begin{tabular}{lrrr}
\toprule
Method & Tuning & Edges & Test log-lik.\ \\
\midrule
Graphical Lasso        & $\lambda=10^{-5}$ & $3{,}226$ & $378{,}984$ \\
\MTPtwo{}              & NA               & $1{,}444$ & $\mathbf{379{,}352}$ \\
Spectral-\MTPtwo{} (BIC)     & $\eta=10$         & $629$     & $378{,}781$ \\
\bottomrule
\end{tabular}
\caption{S\&P~500 tech/communications/discretionary block ($d=141$, $n_{\mathrm{train}}=1{,}759$, $n_{\mathrm{test}}=755$, daily log-returns 2015--2024). BIC-selected tunings. Bold: best test log-likelihood.}
\label{tab:gaussian_sp500}
\end{table}

The \MTPtwo{}-MLE attains the best test log-likelihood, confirming that positive dependence is a good model for an intra-sector block fitted on a large training sample. Spectral-\MTPtwo{} is within $571$ nats of that peak using only $44\%$ of its edges ($629$ vs.\ $1{,}444$), with the BIC-selected $\eta=10$ in the interior of the grid (\Cref{tab:gaussian_sp500_sparse}). Graphical Lasso, in contrast, uses by far the largest edge count of the three methods ($3{,}226$, more than twice the \MTPtwo{}-MLE) and still trails the \MTPtwo{}-MLE by $368$ nats; Spectral-\MTPtwo{} matches GLasso to within $203$ nats while using less than $20\%$ of its edges.
\begin{table}[ht]
\centering
\small
\begin{tabular}{rrrr}
\toprule
$\eta$ & $\varepsilon$ & Edges & Test log-lik.\ \\
\midrule
$1.5$  & $0.98$ & $211$  & $374{,}662$ \\
$2.0$  & $0.94$ & $280$  & $376{,}434$ \\
$3.0$  & $0.87$ & $375$  & $377{,}870$ \\
$4.0$  & $0.80$ & $431$  & $378{,}121$ \\
$6.0$  & $0.69$ & $521$  & $378{,}413$ \\
$8.0$  & $0.63$ & $579$  & $378{,}763$ \\
$10.0$ & $0.58$ & $\mathbf{629}$  & $\mathbf{378{,}781}$ \\
$12.0$ & $0.53$ & $668$  & $378{,}882$ \\
$16.0$ & $0.47$ & $724$  & $378{,}944$ \\
$24.0$ & $0.39$ & $814$  & $379{,}113$ \\
$32.0$ & $0.34$ & $875$ & $379{,}135$ \\
\bottomrule
\end{tabular}
\caption{S\&P~500 tech/comm/discr ($d=141$): Spectral-\MTPtwo{} path over the BSS tuning grid, with $\varepsilon=(\kappa(\eta)-1)/(\kappa(\eta)+1)$. \MTPtwo{}-MLE baseline: $1{,}444$ edges, test log-lik.\ $379{,}352$. Bold: BIC-selected $\eta$.}
\label{tab:gaussian_sp500_sparse}
\end{table}

\subsection{GSE14333 colorectal cancer gene expression ($d=150$)}\label{sec:gse14333}

The second application uses the GSE14333 colorectal cancer expression dataset ($n=290$ primary tumor samples on an Affymetrix HG-U133 Plus~2 array, RMA-normalized log$_2$ intensities from NCBI GEO). Retaining the $d=150$ most-variable probes places the experiment in the moderate-dimensional regime $n\gtrsim d$, where the \MTPtwo{}-MLE exists but is close to ill-posed. A random 70/30 split gives $n_{\mathrm{train}}=203$ and $n_{\mathrm{test}}=87$.

\begin{table}[ht]
\centering
\begin{tabular}{lrrr}
\toprule
Method & Tuning & Edges & Test log-lik.\ \\
\midrule
Graphical Lasso        & $\lambda=0.75$ & $1{,}306$ & $\mathbf{-14{,}782}$ \\
\MTPtwo{}              & NA            & $889$     & $-14{,}803$ \\
Spectral-\MTPtwo{} (BIC)     & $\eta=2.5$     & $332$     & $-14{,}954$ \\
\bottomrule
\end{tabular}
\caption{GSE14333 colorectal cancer gene expression ($d=150$ most-variable probes, $n_{\mathrm{train}}=203$, $n_{\mathrm{test}}=87$). BIC-selected tunings. Bold: best test log-likelihood.}
\label{tab:gaussian_gse14333}
\end{table}

Unlike in the equity example, the Graphical Lasso wins narrowly on test log-likelihood, by only $21$ nats over the \MTPtwo{}-MLE, reflecting that gene-expression profiles carry some sign-mixed partial correlations between tumor-suppressor and proliferation-related pathways that the \MTPtwo{} constraint cannot reproduce. The three methods are nevertheless within $172$ nats of each other on test data. Spectral-\MTPtwo{} trades a $151$-nat log-likelihood cost relative to the \MTPtwo{}-MLE for a graph that retains only $37\%$ of \MTPtwo{}-MLE's edges ($332$ vs.\ $889$, equivalently $25\%$ of GLasso's $1{,}306$ edges; \Cref{tab:gaussian_gse14333_sparse}). Even when \MTPtwo{} is not the dominant model, Spectral-\MTPtwo{} delivers a much sparser graph at a modest log-likelihood cost.

\begin{table}[ht]
\centering
\small
\begin{tabular}{rrrr}
\toprule
$\eta$ & $\varepsilon$ & Edges & Test log-lik.\ \\
\midrule
$1.50$ & $0.98$ & $224$ & $-15{,}094$ \\
$1.75$ & $0.97$ & $256$ & $-15{,}026$ \\
$2.00$ & $0.94$ & $292$ & $-14{,}991$ \\
$2.50$ & $0.90$ & $\mathbf{332}$ & $\mathbf{-14{,}954}$ \\
$3.00$ & $0.87$ & $349$ & $-14{,}953$ \\
$3.50$ & $0.83$ & $358$ & $-14{,}942$ \\
$4.00$ & $0.80$ & $380$ & $-14{,}918$ \\
$5.00$ & $0.74$ & $412$ & $-14{,}899$ \\
$6.00$ & $0.69$ & $433$ & $-14{,}892$ \\
$8.00$ & $0.63$ & $473$ & $-14{,}864$ \\
\bottomrule
\end{tabular}
\caption{GSE14333 CRC ($d=150$): Spectral-\MTPtwo{} path over the BSS tuning grid, with $\varepsilon=(\kappa(\eta)-1)/(\kappa(\eta)+1)$. \MTPtwo{}-MLE baseline: $889$ edges, test log-lik.\ $-14{,}803$. Bold: BIC-selected $\eta$.}
\label{tab:gaussian_gse14333_sparse}
\end{table}

Together, the two experiments test Spectral-\MTPtwo{} in two complementary regimes: the large-$n$ financial regime in which the \MTPtwo{}-MLE is itself the best dense baseline, and the moderate-dimensional genomics regime in which the Graphical Lasso is the dense baseline and Spectral-\MTPtwo{} is asked to preserve an already-constrained \MTPtwo{} fit.

\section{Discussion}\label{sec:conclusion}

We have developed spectral sparsification as a post-estimation tool for Gaussian \MTPtwo{} graphical models. Starting from an \MTPtwo{} precision matrix, the proposed procedure exposes its underlying graph structure through a structural rescaling, applies the deterministic linear-size sparsifier of \citet{BatsonSpielmanSrivastava2012}, and recovers a sparser \MTPtwo{} approximation of the input. We call this newly proposed estimator Spectral-\MTPtwo{}, and provide explicit theoretical guarantees on its statistical performance: it remains \MTPtwo{}, approximates its input under standard distributional and information-theoretic measures, and inherits the consistency rate of its starting point. Across simulations on scale-free graphs and applications to equity returns and gene expression, Spectral-\MTPtwo{} retains, and in some regimes exceeds, the predictive performance of the dense \MTPtwo{} baseline at substantially lower edge counts, producing simpler and more interpretable graphs.

\paragraph{Limitations.} Some of our bounds require the spectral approximation factor to satisfy $\varepsilon\le\tfrac12$, so at higher values of $\varepsilon$ not all of our theoretical guarantees remain valid and Spectral-\MTPtwo{}'s worst-case behavior is less tightly controlled. Across our experiments BIC selects $\varepsilon$ in the range $[0.58,0.90]$, often beyond this regime, but Spectral-\MTPtwo{} still tracks, and on BA150 surpasses, the \MTPtwo{}-MLE on test data. Our consistency results also assume fixed dimension, and a scaling analysis as $d\to\infty$ is left to future work.

\bibliographystyle{abbrvnat}
\bibliography{references}

\begin{thebibliography}{23}
\providecommand{\natexlab}[1]{#1}
\providecommand{\url}[1]{\texttt{#1}}
\expandafter\ifx\csname urlstyle\endcsname\relax
  \providecommand{\doi}[1]{doi: #1}\else
  \providecommand{\doi}{doi: \begingroup \urlstyle{rm}\Url}\fi

\bibitem[Agrawal et~al.(2022)Agrawal, Roy, and Uhler]{AgrawalRoyUhler2022}
R.~Agrawal, U.~Roy, and C.~Uhler.
\newblock Covariance matrix estimation under total positivity for portfolio selection.
\newblock \emph{Journal of Financial Econometrics}, 20\penalty0 (2):\penalty0 367--389, 2022.

\bibitem[Batson et~al.(2012)Batson, Spielman, and Srivastava]{BatsonSpielmanSrivastava2012}
J.~Batson, D.~A. Spielman, and N.~Srivastava.
\newblock Twice-{R}amanujan sparsifiers.
\newblock \emph{SIAM Journal on Computing}, 41\penalty0 (6):\penalty0 1704--1721, 2012.

\bibitem[Bravo-Hermsdorff and Gunderson(2019)]{BravoHermsdorffGunderson2019}
G.~Bravo-Hermsdorff and L.~M. Gunderson.
\newblock A unifying framework for spectrum-preserving graph sparsification and coarsening.
\newblock In \emph{Advances in Neural Information Processing Systems}, volume~32. Curran Associates, Inc., 2019.

\bibitem[Cai et~al.(2023)Cai, Cardoso, Palomar, and Ying]{CaiCardosoPalomarYing2023}
J.-F. Cai, J.~V. d.~M. Cardoso, D.~P. Palomar, and J.~Ying.
\newblock Fast projected {N}ewton-like method for precision matrix estimation under total positivity.
\newblock In \emph{Advances in Neural Information Processing Systems}, volume~36. Curran Associates, Inc., 2023.

\bibitem[Calandriello et~al.(2018)Calandriello, Lazaric, Koutis, and Valko]{CalandrielloLazaricKoutisValko2018}
D.~Calandriello, A.~Lazaric, I.~Koutis, and M.~Valko.
\newblock Improved large-scale graph learning through ridge spectral sparsification.
\newblock In \emph{Proceedings of the 35th International Conference on Machine Learning}, volume~80 of \emph{Proceedings of Machine Learning Research}, pages 688--697. PMLR, 2018.

\bibitem[Cheng et~al.(2015)Cheng, Cheng, Liu, Peng, and Teng]{ChengChengLiuPengTeng2015}
D.~Cheng, Y.~Cheng, Y.~Liu, R.~Peng, and S.-H. Teng.
\newblock Efficient sampling for {G}aussian graphical models via spectral sparsification.
\newblock In \emph{Proceedings of the 28th Conference on Learning Theory}, volume~40 of \emph{Proceedings of Machine Learning Research}, pages 364--390. PMLR, 2015.

\bibitem[Fallat et~al.(2017)Fallat, Lauritzen, Sadeghi, Uhler, Wermuth, and Zwiernik]{Fallat2017}
S.~Fallat, S.~L. Lauritzen, K.~Sadeghi, C.~Uhler, N.~Wermuth, and P.~Zwiernik.
\newblock Total positivity in {M}arkov structures.
\newblock \emph{The Annals of Statistics}, 45\penalty0 (3):\penalty0 1152--1184, 2017.

\bibitem[Friedman et~al.(2008)Friedman, Hastie, and Tibshirani]{FHT2007}
J.~Friedman, T.~Hastie, and R.~Tibshirani.
\newblock Sparse inverse covariance estimation with the graphical lasso.
\newblock \emph{Biostatistics}, 9\penalty0 (3):\penalty0 432--441, 2008.

\bibitem[Fung et~al.(2011)Fung, Hariharan, Harvey, and Panigrahi]{FungHKPR2011}
W.~S. Fung, R.~Hariharan, N.~J.~A. Harvey, and D.~Panigrahi.
\newblock A general framework for graph sparsification.
\newblock \emph{Proceedings of the 43rd Annual {ACM} Symposium on Theory of Computing}, pages 71--80, 2011.

\bibitem[Koutis et~al.(2012)Koutis, Levin, and Peng]{KLP12}
I.~Koutis, A.~Levin, and R.~Peng.
\newblock Improved spectral sparsification and numerical algorithms for {SDD} matrices.
\newblock In \emph{Proceedings of the 29th International Symposium on Theoretical Aspects of Computer Science ({STACS}'12)}, pages 266--277, 2012.

\bibitem[Kyng and Sachdeva(2016)]{KyngSachdeva2016}
R.~Kyng and S.~Sachdeva.
\newblock Approximate {G}aussian elimination for {L}aplacians: Fast, sparse, and simple.
\newblock In \emph{2016 {IEEE} 57th Annual Symposium on Foundations of Computer Science ({FOCS})}, pages 573--582. IEEE, 2016.

\bibitem[Lauritzen(1996)]{lauritzen1996}
S.~L. Lauritzen.
\newblock \emph{Graphical Models}, volume~17 of \emph{Oxford Statistical Science Series}.
\newblock Clarendon Press, Oxford, 1996.

\bibitem[Lauritzen and Zwiernik(2022)]{LZ2022}
S.~L. Lauritzen and P.~Zwiernik.
\newblock Locally associated graphical models and mixed convex exponential families.
\newblock \emph{The Annals of Statistics}, 50\penalty0 (5):\penalty0 3009--3038, 2022.

\bibitem[Lauritzen et~al.(2019)Lauritzen, Uhler, and Zwiernik]{LauritzenUhlerZwiernik2019}
S.~L. Lauritzen, C.~Uhler, and P.~Zwiernik.
\newblock Maximum likelihood estimation in {G}aussian models under total positivity.
\newblock \emph{The Annals of Statistics}, 47\penalty0 (4):\penalty0 1835--1863, 2019.

\bibitem[Lauritzen et~al.(2021)Lauritzen, Uhler, and Zwiernik]{LauritzenUhlerZwiernik2021}
S.~L. Lauritzen, C.~Uhler, and P.~Zwiernik.
\newblock Total positivity in exponential families with application to binary variables.
\newblock \emph{The Annals of Statistics}, 49\penalty0 (3):\penalty0 1436--1459, 2021.

\bibitem[Sadhanala et~al.(2016)Sadhanala, Wang, and Tibshirani]{SadhanalaWangTibshirani2016}
V.~Sadhanala, Y.-X. Wang, and R.~J. Tibshirani.
\newblock Graph sparsification approaches for {L}aplacian smoothing.
\newblock In \emph{Proceedings of the 19th International Conference on Artificial Intelligence and Statistics}, volume~51 of \emph{Proceedings of Machine Learning Research}, pages 1250--1259. PMLR, 2016.

\bibitem[Slawski and Hein(2015)]{SlawskiHein2015}
M.~Slawski and M.~Hein.
\newblock Estimation of positive definite {M}-matrices and structure learning for attractive {G}aussian {M}arkov random fields.
\newblock \emph{Linear Algebra and its Applications}, 473:\penalty0 145--179, 2015.

\bibitem[Spielman and Srivastava(2011)]{SpSr2011}
D.~A. Spielman and N.~Srivastava.
\newblock Graph sparsification by effective resistances.
\newblock \emph{SIAM Journal on Computing}, 40\penalty0 (6):\penalty0 1913--1926, 2011.

\bibitem[Spielman and Teng(2004)]{ST2004}
D.~A. Spielman and S.-H. Teng.
\newblock Nearly-linear time algorithms for graph partitioning, graph sparsification, and solving linear systems.
\newblock In \emph{Proceedings of the 36th Annual {ACM} Symposium on Theory of Computing ({STOC}'04)}, pages 81--90, 2004.

\bibitem[Wang et~al.(2023)Wang, Ying, and Palomar]{WangYingPalomar2023}
X.~Wang, J.~Ying, and D.~P. Palomar.
\newblock Learning large-scale {MTP}$_2$ {G}aussian graphical models via bridge-block decomposition.
\newblock In \emph{Advances in Neural Information Processing Systems}, volume~36. Curran Associates, Inc., 2023.

\bibitem[Wang et~al.(2020)Wang, Roy, and Uhler]{WangRoyUhler2020}
Y.~Wang, U.~Roy, and C.~Uhler.
\newblock Learning high-dimensional {G}aussian graphical models under total positivity without adjustment of tuning parameters.
\newblock In \emph{Proceedings of the 23rd International Conference on Artificial Intelligence and Statistics ({AISTATS} 2020)}, volume 108 of \emph{Proceedings of Machine Learning Research}, pages 2698--2708. PMLR, 2020.

\bibitem[Wang et~al.(2022)Wang, Zhao, and Feng]{WangZhaoFeng2022}
Y.~Wang, Z.~Zhao, and Z.~Feng.
\newblock Scalable graph topology learning via spectral densification.
\newblock In \emph{Proceedings of the 15th {ACM} International Conference on Web Search and Data Mining ({WSDM} '22)}, pages 1099--1108. ACM, 2022.

\bibitem[Ying et~al.(2023)Ying, Cardoso, and Palomar]{YingCardosoPalomar2023}
J.~Ying, J.~V. d.~M. Cardoso, and D.~P. Palomar.
\newblock Adaptive estimation of graphical models under total positivity.
\newblock In \emph{Proceedings of the 40th International Conference on Machine Learning ({ICML} 2023)}, volume 202 of \emph{Proceedings of Machine Learning Research}. PMLR, 2023.

\end{thebibliography}

\appendix

\section{BSS algorithm: mechanics and complexity}\label{sec:bss_details}

We outline the deterministic linear-size sparsifier of \citet{BatsonSpielmanSrivastava2012} referenced in \Cref{thm:bss}, both to fix notation used in our pipeline (\Cref{alg:bss_refit}) and to give the per-iteration cost analysis underlying the runtime claim.

Any weighted-graph Laplacian admits the rank-one decomposition $L=\sum_{e\in E}c_e\,b_eb_e^\top$, where $c_e>0$ is the edge weight from \Cref{sec:graph_prelims} (a \emph{conductance} in the resistor-network reading) and $b_e:=e_i-e_j\in\RR^d$ is the signed indicator of the edge $e=\{i,j\}$. On $\one^\perp$, write the spectral decomposition $L=Q\Lambda Q^\top$ with $Q\in\RR^{d\times(d-1)}$ collecting the eigenvectors associated with the nonzero eigenvalues and $\Lambda\succ 0$ the corresponding diagonal of eigenvalues, and set $v_e:=\sqrt{c_e}\,\Lambda^{-1/2}Q^\top b_e\in\RR^{d-1}$. The decomposition then becomes $\sum_{e\in E}v_ev_e^\top=I_{d-1}$, so building a spectral approximation of $L$ is equivalent to choosing nonnegative weights $s_e\ge 0$ supported on a small subset of edges such that the partial sum $A:=\sum_{e\in E}s_e v_ev_e^\top$ approximates $I_{d-1}$ in Loewner order. Pulling back through $Q$ and $\Lambda^{1/2}$ recovers a sparse Laplacian.

The BSS algorithm builds $A$ greedily over $q:=\lceil\eta(d-1)\rceil$ iterations, governed by two scalar barriers $\ell<\lambda_{\min}(A)\le\lambda_{\max}(A)<u$ together with the \emph{barrier potentials}
\[
\Phi_u(A,u):=\tr(uI_{d-1}-A)^{-1},\qquad \Phi_\ell(A,\ell):=\tr(A-\ell I_{d-1})^{-1},
\]
each of which diverges as the spectrum of $A$ approaches the corresponding barrier. At every step both barriers are nudged outward by fixed positive increments $\delta_U,\delta_L$ that depend only on $\eta$, and a single edge $e$ together with a weight $\alpha>0$ is selected so that the rank-one update $A\mapsto A+\alpha v_ev_e^\top$ leaves both potentials nonincreasing under the corresponding shifts of the barriers. \citet{BatsonSpielmanSrivastava2012} prove that such a feasible pair $(e,\alpha)$ always exists and that, after $q$ steps, the eigenvalues of $A$ lie in an interval $[\ell_q,u_q]$ with ratio $u_q/\ell_q=\kappa(\eta)$. Pulling back to the original coordinates produces a Laplacian $\widetilde L$ supported on at most $q$ edges and satisfying the Loewner bound of \Cref{thm:bss}.

Computationally, the initial eigendecomposition costs $O(d^3)$, and there are $q=\lceil\eta(d-1)\rceil$ iterations. Each iteration must (i) evaluate the barrier potentials, which depend on the inverses of $(uI-A)$ and $(A-\ell I)$, and (ii) compute the candidate-score reduction $\alpha_e\propto v_e^\top M v_e$ for $M\in\{(uI-A)^{-1},(A-\ell I)^{-1}\}$, evaluated at all $m$ candidate edges via the matrix product $U:=MV$ with $V\in\RR^{(d-1)\times m}$. With direct (re-)inversion at each step, (i) costs $O(d^3)$ per iteration and (ii) costs $O(d^2 m)$ per iteration, giving a total runtime of $O(\eta(d^4+d^3 m))$. Maintaining both $M$ and the precomputed score matrix $U$ incrementally via Sherman--Morrison rank-one updates after each accepted edge reduces these to $O(d^2)$ and $O(d m)$ respectively per iteration, yielding the asymptotically faster runtime of $O(\eta(d^3+d^2 m))$ established by \citet{BatsonSpielmanSrivastava2012}. Our reference R implementation uses direct inversions for simplicity.

\section{Proofs}\label{sec:proofs}

\subsection{Proofs for \Cref{sec:gaussian}}

\begin{proof}[Proof of \Cref{thm:positive_row_sum_scaling}]
Consider the function $\Psi(\xi):=\tfrac12\xi^\top W\xi-\sum_{i=1}^d u_i\log\xi_i$ on $\RR_{>0}^d$. Since $W\succ 0$, $\Psi$ is strictly convex and coercive: the quadratic term diverges as $\|\xi\|\to\infty$, and $-u_i\log\xi_i\to+\infty$ if $\xi_i\downarrow 0$. Hence $\Psi$ has a unique minimizer $\xi^*\in\RR_{>0}^d$, whose first-order conditions give $\xi_i^*(W\xi^*)_i=u_i$ for each $i$, that is, $\diag(\xi^*)W\xi^*=u$. Setting $\Xi:=\diag(\xi^*)$ yields $\Xi W\Xi\one=u$.
\end{proof}

\begin{remark}[Computing the scaling $\Xi$]\label{rem:sinkhorn_iteration}
The positive diagonal $\Xi$ asserted by \Cref{thm:positive_row_sum_scaling} can be computed in practice as follows. When the estimated precision matrix already satisfies $\widehat K\one>0$, one may take $\Xi=I_d$ and $u=\widehat K\one$, so no scaling is needed. Otherwise, we fix the target $u=\one_d$ and solve $\Xi\widehat K\Xi\one=\one$ by the coordinatewise fixed-point iteration
\[
\xi_i \;\leftarrow\; \frac{-r_i+\sqrt{r_i^{\,2}+4\widehat K_{ii}}}{2\widehat K_{ii}},
\qquad r_i := \sum_{j\neq i}\widehat K_{ij}\,\xi_j,
\]
which is the diagonal update of the Sinkhorn-type problem associated with \Cref{thm:positive_row_sum_scaling} and converges linearly from any strictly positive initialization. Its per-iteration cost is $O(d^2)$, negligible compared to the cost of the BSS sparsification step in \Cref{alg:bss_refit}.
\end{remark}

\begin{proposition}\label{prop:B_is_SDDM}
With $\Xi$ and $u$ as above, $B:=\Xi\widehat K\Xi$ is a symmetric positive definite $M$-matrix satisfying $B\one=u>0$. Equivalently, $B$ is an SDDM matrix.
\end{proposition}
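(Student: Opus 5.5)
The plan is to verify the four defining properties of $B=\Xi\widehat K\Xi$ directly from the corresponding properties of $\widehat K$ and the positivity of the diagonal $\Xi$, and then to read off the SDDM characterization.

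We take $\Xi=\diag(\xi_1,\ldots,\xi_d)$ with all $\xi_i>0$ and $\Xi\widehat K\Xi\one=u$, as provided by \Cref{thm:positive_row_sum_scaling} applied to $W=\widehat K$. That theorem applies because $\widehat K$ is symmetric and positive definite, being the precision matrix of a non-degenerate Gaussian.

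\emph{Symmetry.} Since $\Xi$ is diagonal, $B^\top=\Xi^\top\widehat K^\top\Xi^\top=\Xi\widehat K\Xi=B$.

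\emph{Positive definiteness.} This is a congruence argument. For $z\neq 0$, the vector $\Xi z$ is nonzero because $\Xi$ is invertible, so $z^\top Bz=(\Xi z)^\top\widehat K(\Xi z)>0$.

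\emph{Sign pattern.} For $i\neq j$ we have $B_{ij}=\xi_i\xi_j\widehat K_{ij}$. Since $\xi_i\xi_j>0$ and $\widehat K_{ij}\le 0$ by \eqref{eq:mtp2_gaussian}, it follows that $B_{ij}\le 0$. The zero pattern is also preserved, so $B$ and $\widehat K$ have the same off-diagonal support. A symmetric positive definite matrix with nonpositive off-diagonal entries is a symmetric positive definite $M$-matrix, which is exactly the characterization used in \eqref{eq:mtp2_gaussian}.

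\emph{Row sums.} $B\one=u>0$ holds by construction of $\Xi$.

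\emph{SDDM equivalence.} Row $i$ of $B\one=u$ reads
\[
B_{ii}-\sum_{j\neq i}|B_{ij}| \;=\; u_i \;>\; 0,
\]
because every off-diagonal entry is nonpositive. So $B$ is symmetric, has nonpositive off-diagonal entries, and is strictly diagonally dominant with positive diagonal; this is the SDDM definition. Conversely, an SDDM matrix is positive definite by Gershgorin's theorem, and its row sums are positive. Hence, for symmetric matrices with nonpositive off-diagonal entries, the two descriptions coincide.

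No step is a genuine obstacle. The only point needing care is the direction of the equivalence: strict diagonal dominance comes from the positivity of $u$, not from positive definiteness alone. A positive definite $M$-matrix need not have positive row sums, and this is precisely why the rescaling by $\Xi$ is needed.
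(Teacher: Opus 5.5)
Your proof is correct and follows essentially the same route as the paper. Congruence by the positive diagonal $\Xi$ gives symmetry and positive definiteness, $B_{ij}=\xi_i\xi_j\widehat K_{ij}\le 0$ gives the sign pattern, and the identity $B_{ii}-\sum_{j\neq i}|B_{ij}|=u_i>0$ gives strict diagonal dominance; your Gershgorin remark for the converse is a harmless extra that the paper leaves implicit.
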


\begin{proof}[Proof of \Cref{prop:B_is_SDDM}]
Because $\Xi$ is positive diagonal and $\widehat K$ is symmetric positive definite, $B=\Xi\widehat K\Xi$ is symmetric positive definite. Moreover, $B_{ij}=\xi_i\xi_j\widehat K_{ij}\le 0$ for $i\neq j$, and by construction $B\one=u>0$. For each $i$,
\[
B_{ii}-\sum_{j\neq i}|B_{ij}|
= B_{ii}+\sum_{j\neq i}B_{ij}
= u_i > 0,
\]
so $B$ is strictly diagonally dominant and hence an SDDM matrix.
\end{proof}

\begin{proposition}\label{prop:LB_is_laplacian}
With $B$ as in \Cref{prop:B_is_SDDM} and $D:=\diag(u)$, the matrix $L_B:=B-D$ is a graph Laplacian, and its edge set coincides with the sparsity pattern of $\widehat K$. In particular, if $\widehat K$ has a connected sparsity pattern, then $L_B$ is the Laplacian of a connected weighted graph.
\end{proposition}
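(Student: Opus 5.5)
The plan is to verify directly the defining properties of a weighted graph Laplacian, $L=D_c-A_c$ with $A_c$ a symmetric nonnegative weight matrix with zero diagonal and $D_c$ its degree matrix, for $L_B=B-D$, using only \Cref{prop:B_is_SDDM}. Concretely, I will define weights $c_{ij}:=-B_{ij}=-\xi_i\xi_j\widehat K_{ij}$ for $i\neq j$ and $c_{ii}:=0$. By \Cref{prop:B_is_SDDM}, $B$ is symmetric with nonpositive off-diagonal entries, so $c$ is a symmetric function into $[0,\infty)$, as required in \Cref{sec:graph_prelims}.

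Next I will check that $L_B$ coincides with $\diag(k_1,\ldots,k_d)-(c_{ij})$. Off the diagonal this is immediate, since $D$ is diagonal and so $(L_B)_{ij}=B_{ij}=-c_{ij}$. On the diagonal I will use the row-sum identity $B\one=u$:
\[
(L_B)_{ii}=B_{ii}-u_i=-\sum_{j\neq i}B_{ij}=\sum_{j\neq i}c_{ij}=k_i .
\]
Equivalently, $L_B\one=B\one-u=0$, and a symmetric matrix with nonpositive off-diagonal entries and zero row sums is exactly a weighted Laplacian. Positive semidefiniteness then follows automatically from the identity $z^\top L_Bz=\sum_{i<j}c_{ij}(z_i-z_j)^2$.

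For the edge set, an edge $\{i,j\}$ with $i\neq j$ is present iff $c_{ij}>0$, which holds iff $\widehat K_{ij}\neq 0$, because $\xi_i\xi_j>0$. Hence the edge set of $L_B$ is exactly the off-diagonal sparsity pattern of $\widehat K$. The connectivity claim follows because the weighted graph of $L_B$ is the same graph as the sparsity graph of $\widehat K$, with positive weights on every edge.

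There is no real obstacle here. The only point needing care is bookkeeping: the diagonal part $D=\diag(u)$ must absorb exactly the surplus $B_{ii}-\sum_{j\neq i}|B_{ij}|=u_i$, leaving zero row sums in $L_B$. Everything else is sign-pattern preservation under the positive diagonal congruence by $\Xi$.
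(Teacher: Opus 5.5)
Your proposal is correct and follows essentially the same route as the paper: nonpositive off-diagonal entries inherited from $B$, zero row sums from $B\one=u$, and the edge set identified via $B_{ij}=\xi_i\xi_j\widehat K_{ij}$ with $\xi_i\xi_j>0$. Your explicit check that the diagonal equals the weighted degrees, and the quadratic-form identity for positive semidefiniteness, are just slightly more detailed bookkeeping than the paper gives.
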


\begin{proof}[Proof of \Cref{prop:LB_is_laplacian}]
Since $B_{ij}\le 0$ for $i\neq j$, we have $(L_B)_{ij}=B_{ij}\le 0$ for $i\neq j$. Moreover, $L_B$ is symmetric and $L_B\one = B\one - D\one = u-u=\zero$. Hence $L_B$ is a graph Laplacian. The edge set of $L_B$ consists of all pairs $\{i,j\}$ with $B_{ij}<0$, which coincides with the edge set of $\widehat K$ since $B_{ij}=\xi_i\xi_j\widehat K_{ij}$ and $\xi_i>0$. In particular, if $\widehat K$ has a connected sparsity pattern, then $L_B$ is the Laplacian of a connected graph.
\end{proof}

\begin{proposition}\label{prop:decomp_approx}
Let $\widetilde L_B$ be any Laplacian with $(1-\varepsilon)L_B\NSD\widetilde L_B\NSD(1+\varepsilon)L_B$ on $\one^\perp$, and set $\widetilde B:=\widetilde L_B+D$. Then $\widetilde B$ is SDDM and $(1-\varepsilon)B\NSD\widetilde B\NSD(1+\varepsilon)B$ on $\RR^d$.
\end{proposition}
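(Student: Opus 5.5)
We need to prove that $\widetilde B=\widetilde L_B+D$ is SDDM and that $(1-\varepsilon)B\NSD\widetilde B\NSD(1+\varepsilon)B$ on all of $\RR^d$. The plan is to reduce everything to quadratic forms and exploit the splitting $B=L_B+D$ with $D=\diag(u)\succ 0$.

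\emph{Step 1: extend the Laplacian sandwich to all of $\RR^d$.} Both $L_B$ and $\widetilde L_B$ annihilate $\one$ (the latter because it is a Laplacian). Write any $z\in\RR^d$ as $z=z_0+c\one$ with $z_0\in\one^\perp$. Then
\[
z^\top L_B z=z_0^\top L_B z_0,\qquad z^\top\widetilde L_B z=z_0^\top\widetilde L_B z_0 ,
\]
since the cross terms vanish by symmetry. Hence the hypothesis on $\one^\perp$ gives
\[
(1-\varepsilon)\,z^\top L_Bz\le z^\top\widetilde L_Bz\le(1+\varepsilon)\,z^\top L_Bz\qquad\text{for every } z\in\RR^d .
\]

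\emph{Step 2: add the diagonal part.} Since $D\succ 0$ and $\varepsilon\in(0,1)$, we have $(1-\varepsilon)\,z^\top Dz\le z^\top Dz\le(1+\varepsilon)\,z^\top Dz$. Adding this to Step 1 and using $B=L_B+D$, $\widetilde B=\widetilde L_B+D$ yields
\[
(1-\varepsilon)\,z^\top Bz\le z^\top\widetilde Bz\le(1+\varepsilon)\,z^\top Bz ,
\]
which is the claimed Loewner sandwich.

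\emph{Step 3: SDDM property.} The off-diagonal entries of $\widetilde B$ coincide with those of $\widetilde L_B$, which are nonpositive because $\widetilde L_B$ is a Laplacian with nonnegative edge weights. Its row sums are
\[
\widetilde B\one=\widetilde L_B\one+u=u>0,
\]
so for each $i$ we get $\widetilde B_{ii}-\sum_{j\ne i}|\widetilde B_{ij}|=u_i>0$. Thus $\widetilde B$ is strictly diagonally dominant with nonpositive off-diagonals. It is symmetric, and positive definite: either by Gershgorin, or directly from $\widetilde B\PSD(1-\varepsilon)B\succ0$ using \Cref{prop:B_is_SDDM}.

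The only step needing care is Step 1, namely checking that the restriction ``on $\one^\perp$'' transfers to the whole space. This hinges on $\widetilde L_B$ being a genuine Laplacian, i.e.\ having kernel containing $\one$, which is guaranteed because BSS outputs a reweighted subgraph Laplacian. Note also that Step 2 uses $\varepsilon<1$, so that $1-\varepsilon\le 1\le1+\varepsilon$.
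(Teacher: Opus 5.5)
Your proof is correct and is essentially the paper's argument. The paper writes $(1+\varepsilon)B-\widetilde B=\bigl((1+\varepsilon)L_B-\widetilde L_B\bigr)+\varepsilon D$ and $\widetilde B-(1-\varepsilon)B=\bigl(\widetilde L_B-(1-\varepsilon)L_B\bigr)+\varepsilon D$, extends the Laplacian part from $\one^\perp$ to $\RR^d$ using $\one\in\ker$, and adds $\varepsilon D\succ 0$ (obtaining even strict inequalities), then reads off the SDDM property from the nonpositive off-diagonals and $\widetilde B\one=u>0$ exactly as you do.
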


\begin{proof}[Proof of \Cref{prop:decomp_approx}]
For the upper bound,
\[
(1+\varepsilon)B - \widetilde B
= \bigl((1+\varepsilon)L_B - \widetilde L_B\bigr) + \varepsilon D.
\]
The spectral sparsification guarantee gives $(1+\varepsilon)L_B - \widetilde L_B \PSD 0$ on $\one^\perp$. Since both $L_B$ and $\widetilde L_B$ are graph Laplacians, the vector $\one$ is in their kernel, so the inequality extends to all of $\RR^d$. The second term $\varepsilon D\succ 0$ since $D=\diag(u)\succ 0$. Hence $(1+\varepsilon)B-\widetilde B\succ 0$. The lower bound follows analogously:
\[
\widetilde B - (1-\varepsilon)B
= \bigl(\widetilde L_B - (1-\varepsilon)L_B\bigr) + \varepsilon D
\PSD 0 + \varepsilon D \succ 0.
\]
In particular, $\widetilde B\succ 0$. Since $\widetilde L_B$ is a graph Laplacian, it has nonpositive off-diagonal entries, and $D$ is diagonal, so $\widetilde B_{ij}=(\widetilde L_B)_{ij}\le 0$ for $i\neq j$. Hence $\widetilde B$ is a symmetric positive definite $M$-matrix. Finally, $\widetilde B\one = \widetilde L_B\one + D\one = \zero + u = u >0$, so $\widetilde B$ is SDDM.
\end{proof}

\begin{proof}[Proof of \Cref{thm:gaussian_pullback}]
Symmetry and positive definiteness are preserved under positive diagonal congruence (the map $A\mapsto DAD$ with $D$ a positive diagonal matrix; here $D=\Xi^{-1}$). For $i\neq j$, $(\widetilde K)_{ij}=\xi_i^{-1}\xi_j^{-1}\widetilde B_{ij}\le 0$, so $\widetilde K$ is a symmetric positive definite $M$-matrix.

Now let $y\in\RR^d$ and set $z:=\Xi^{-1}y$. Then $y^\top \widehat K y = z^\top Bz$ and $y^\top \widetilde K y = z^\top \widetilde Bz$. Using \Cref{prop:decomp_approx},
\[
(1-\varepsilon) y^\top \widehat K y
\le y^\top \widetilde K y
\le (1+\varepsilon)y^\top \widehat K y
\qquad \forall y\in\RR^d,
\]
which is the stated Loewner-order bound. Finally, since the scaling factors $\xi_i$ are nonzero, $(\widetilde K)_{ij}=0$ if and only if $\widetilde B_{ij}=0$.
\end{proof}

\begin{proof}[Proof of Corollary~\ref{cor:mtp2_preserved_kl}]
The first claim holds because $\widetilde K$ is a symmetric positive definite $M$-matrix (\Cref{thm:gaussian_pullback}), which is the \MTPtwo{} condition for multivariate Gaussians. For the KL bound, recall that for zero-mean Gaussians
\[
2D_{\mathrm{KL}}\!\bigl(\mathcal{N}(0,\widehat K^{-1})\,\big\|\,\mathcal{N}(0,\widetilde K^{-1})\bigr)
=
\tr(\widetilde K\widehat K^{-1}) - d + \log\det(\widehat K\widetilde K^{-1})
=
\sum_{i=1}^d(\mu_i - 1 - \log\mu_i),
\]
where $\mu_1,\ldots,\mu_d$ are the eigenvalues of $\widetilde K\widehat K^{-1}$. Since $(1-\varepsilon)\widehat K\NSD\widetilde K\NSD(1+\varepsilon)\widehat K$, each $\mu_i\in[1-\varepsilon,1+\varepsilon]$. Writing $\mu_i=1+t_i$ with $|t_i|\le\varepsilon\le\tfrac12$ and using $t-\log(1+t)\le t^2$ for $|t|\le\tfrac12$, we get $\mu_i-1-\log\mu_i\le\varepsilon^2$, and summing over $i$ gives $2D_{\mathrm{KL}}\le d\varepsilon^2$.
\end{proof}

\begin{proof}[Proof of \Cref{thm:gaussian_loglik}]
For zero-mean Gaussians the Kullback--Leibler divergence has the closed form
\[
2D_{\mathrm{KL}}\!\bigl(\mathcal{N}(0,\widehat K^{-1})\,\big\|\,\mathcal{N}(0,\widetilde K^{-1})\bigr)
\;=\;
\tr(\widetilde K\widehat K^{-1}) - d - \log\det(\widetilde K\widehat K^{-1}),
\]
so $\log\det\widetilde K - \log\det\widehat K = \tr(\widetilde K\widehat K^{-1}) - d - 2D_{\mathrm{KL}}$. Substituting into $\ell(\widetilde K;T) - \ell(\widehat K;T) = (\log\det\widetilde K - \log\det\widehat K) - \tr((\widetilde K - \widehat K)T)$ and using $\tr(\widetilde K\widehat K^{-1}) - d = \tr((\widetilde K-\widehat K)\widehat K^{-1})$ gives
\[
\ell(\widetilde K;T) - \ell(\widehat K;T)
\;=\;
-2D_{\mathrm{KL}} + \tr\!\bigl((\widetilde K-\widehat K)(\widehat K^{-1}-T)\bigr)
\;=\;
-2D_{\mathrm{KL}} - \tr\!\bigl((\widetilde K-\widehat K)(T-\widehat K^{-1})\bigr),
\]
which is the identity~\eqref{eq:loglik_identity}.

For the bound~\eqref{eq:loglik_bound}, \Cref{cor:mtp2_preserved_kl} gives $0\le 2D_{\mathrm{KL}}\le d\varepsilon^2$. The Loewner sandwich $-\varepsilon\widehat K\NSD\widetilde K-\widehat K\NSD\varepsilon\widehat K$ is equivalent to $\|\widehat K^{-1/2}(\widetilde K-\widehat K)\widehat K^{-1/2}\|_{\mathrm{op}}\le\varepsilon$. Writing
\[
\tr\!\bigl((\widetilde K-\widehat K)(T-\widehat K^{-1})\bigr)
\;=\;
\tr\!\bigl(\widehat K^{-1/2}(\widetilde K-\widehat K)\widehat K^{-1/2}\,\cdot\,\widehat K^{1/2}(T-\widehat K^{-1})\widehat K^{1/2}\bigr)
\]
and applying the operator--trace H\"older inequality $|\tr(M_1 M_2)|\le\|M_1\|_{\mathrm{op}}\|M_2\|_*$,
\[
\bigl|\tr\!\bigl((\widetilde K-\widehat K)(T-\widehat K^{-1})\bigr)\bigr|
\;\le\;\varepsilon\,\bigl\|\widehat K^{1/2}(T-\widehat K^{-1})\widehat K^{1/2}\bigr\|_*
\;=\;\varepsilon\,R(T).
\]
Combining the two contributions gives $|\ell(\widetilde K;T)-\ell(\widehat K;T)|\le d\varepsilon^2+\varepsilon R(T)$.
\end{proof}

\begin{proof}[Proof of Corollary~\ref{cor:loglik_refit}]
Define the Bregman divergence associated with the strictly convex generator $f(K):=-\log\det K$ on the cone of positive definite matrices:
\begin{align*}
B_f(K_1,K_2) &:= f(K_1)-f(K_2)-\langle\nabla f(K_2),K_1-K_2\rangle \\
&\phantom{:}= \log\det K_2-\log\det K_1+\tr(K_2^{-1}(K_1-K_2)).
\end{align*}
A direct computation gives the closed-form identity
\[
B_f(K_1,K_2) = 2\,D_{\mathrm{KL}}\!\bigl(\mathcal{N}(0,K_2^{-1})\,\big\|\,\mathcal{N}(0,K_1^{-1})\bigr).
\]
For any positive semidefinite $S$, the negative log-likelihood $-\ell(K;S) = -\log\det K + \tr(KS) + \mathrm{const}$ differs from $f(K)$ only by a linear term in $K$, which contributes nothing to the Bregman divergence. Hence the Bregman divergence of $-\ell(\cdot;S)$ coincides with $B_f$, and we have the exact Taylor identity, valid for every $K,K'\succ 0$:
\begin{equation}\label{eq:taylor_loglik}
-\ell(K;S) \;=\; -\ell(K';S) + \langle -\nabla_K\ell(K';S),\,K-K'\rangle + B_f(K,K').
\end{equation}

The feasible set $C := \{K\succ 0:\,K\text{ is \MTPtwo{}},\;\supp(K)\subseteq E_{\mathrm{sp}}\}$ is convex, being the intersection of the PD cone, the linear subspace $\{K:K_{ij}=0\text{ for all }i\ne j\text{ with }\{i,j\}\notin E_{\mathrm{sp}}\}$, and the convex cone of nonpositive off-diagonals. Since $\widetilde K^{\mathrm{MLE}}$ maximizes $\ell(\cdot;S)$ over $C$, the first-order optimality condition reads
\[
\langle -\nabla_K\ell(\widetilde K^{\mathrm{MLE}};S),\,K-\widetilde K^{\mathrm{MLE}}\rangle \;\ge\; 0 \qquad\text{for every }K\in C.
\]
Specializing~\eqref{eq:taylor_loglik} to $K'=\widetilde K^{\mathrm{MLE}}$ and combining with this variational inequality yields, for every $K\in C$,
\[
\ell(\widetilde K^{\mathrm{MLE}};S) - \ell(K;S) \;\ge\; B_f(K,\widetilde K^{\mathrm{MLE}}) \;=\; 2\,D_{\mathrm{KL}}\!\bigl(\mathcal{N}(0,(\widetilde K^{\mathrm{MLE}})^{-1})\,\big\|\,\mathcal{N}(0,K^{-1})\bigr).
\]
The matrix $\widetilde K$ is \MTPtwo{} and supported on $E_{\mathrm{sp}}$ by \Cref{thm:gaussian_pullback}, so $\widetilde K\in C$, and applying the inequality with $K=\widetilde K$ proves~\eqref{eq:refit_pythagoras}. Chaining with the lower bound $\ell(\widetilde K;S)-\ell(\widehat K;S)\ge-d\varepsilon^2-\varepsilon R(S)$ from \Cref{thm:gaussian_loglik} yields the second claim.
\end{proof}

\begin{proposition}[Non-asymptotic Frobenius bound and consistency]\label{prop:gaussian_consistency}
Let $\widehat K\in\RR^{d\times d}$ be an \MTPtwo{} positive definite $M$-matrix and $\widetilde K$ the output of \Cref{thm:gaussian_pullback} applied to $\widehat K$ at level $\varepsilon\in(0,1)$. Then $\widetilde K$ is itself an \MTPtwo{} positive definite $M$-matrix, and for every $K^\star\in\RR^{d\times d}$,
\begin{equation}\label{eq:bias_decomp_app}
\|\widetilde K - K^\star\|_F
\;\le\;
\underbrace{\varepsilon\sqrt d\,\|\widehat K\|_2}_{\text{sparsification bias}}
\;+\;
\underbrace{\|\widehat K - K^\star\|_F}_{\text{starting-estimator error}}.
\end{equation}
The first term is a deterministic, $\varepsilon$-controlled bias from sparsification; the second is whatever error $\widehat K$ already carries.
Suppose now that $K^\star$ is the true precision matrix of a Gaussian \MTPtwo{} model with $d$ fixed, and that $\widehat K_n$ is a consistent estimator with $\|\widehat K_n - K^\star\|_F = O_p(r_n)$ for some sequence $r_n\to 0$. Applying~\eqref{eq:bias_decomp_app} with $\varepsilon=\varepsilon_n\in(0,1)$ gives
\[
\|\widetilde K_n - K^\star\|_F = O_p\bigl(r_n + \varepsilon_n\sqrt d\bigr).
\]
In particular, $\widetilde K_n$ is consistent whenever $\varepsilon_n\to 0$, with rate $\max(r_n,\varepsilon_n\sqrt d)$, and the rate $r_n$ is preserved when $\varepsilon_n = O(r_n/\sqrt d)$. The unpenalized Gaussian \MTPtwo{}-MLE \citep{LauritzenUhlerZwiernik2019} is a valid choice of $\widehat K_n$.
\end{proposition}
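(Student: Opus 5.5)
The plan is to reduce everything to \Cref{thm:gaussian_pullback} and a triangle inequality. First, the structural claim that $\widetilde K$ is an \MTPtwo{} positive definite $M$-matrix is exactly the first part of \Cref{thm:gaussian_pullback} together with \Cref{cor:mtp2_preserved_kl}, so nothing new is needed there. For \eqref{eq:bias_decomp_app} I will write $\widetilde K-K^\star=(\widetilde K-\widehat K)+(\widehat K-K^\star)$ and apply the triangle inequality in Frobenius norm. It then suffices to show $\|\widetilde K-\widehat K\|_F\le\varepsilon\sqrt d\,\|\widehat K\|_2$.

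For this bound, set $\Delta:=\widetilde K-\widehat K$, which is symmetric. \Cref{thm:gaussian_pullback} gives $-\varepsilon\widehat K\NSD\Delta\NSD\varepsilon\widehat K$. Since $\widehat K\NSD\|\widehat K\|_2 I$, this yields $-\varepsilon\|\widehat K\|_2 I\NSD\Delta\NSD\varepsilon\|\widehat K\|_2 I$, so every eigenvalue of $\Delta$ lies in $[-\varepsilon\|\widehat K\|_2,\varepsilon\|\widehat K\|_2]$. The step from the $\widehat K$-weighted sandwich to the identity-weighted one is legitimate here because both sides of each inequality are being enlarged in Loewner order ($\varepsilon\widehat K\NSD\varepsilon\|\widehat K\|_2I$ and $-\varepsilon\widehat K\PSD-\varepsilon\|\widehat K\|_2 I$). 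Hence $\|\Delta\|_2\le\varepsilon\|\widehat K\|_2$. For a symmetric $d\times d$ matrix, $\|\Delta\|_F^2=\sum_i\lambda_i(\Delta)^2\le d\|\Delta\|_2^2$, which gives the bias term. This is the only step requiring any care, and it is elementary.

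For the asymptotic statement, I apply \eqref{eq:bias_decomp_app} with $\widehat K=\widehat K_n$ and $\varepsilon=\varepsilon_n$. The only extra ingredient is that $\|\widehat K_n\|_2=O_p(1)$. This follows from $\|\widehat K_n\|_2\le\|K^\star\|_2+\|\widehat K_n-K^\star\|_F$ with $d$ and $K^\star$ fixed and $r_n\to0$. Therefore $\varepsilon_n\sqrt d\|\widehat K_n\|_2=O_p(\varepsilon_n\sqrt d)$, and adding $O_p(r_n)$ gives the stated rate. The consequences then follow directly: the estimator is consistent when $\varepsilon_n\to0$, the rate is the maximum of the two terms, and the rate $r_n$ is preserved when $\varepsilon_n=O(r_n/\sqrt d)$.

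Finally, the unpenalized \MTPtwo{}-MLE is a valid choice of $\widehat K_n$ by citing \citet{LauritzenUhlerZwiernik2019}. There the MLE exists with probability one for $n\ge2$ and, for fixed $d$, is consistent as a constrained MLE in a regular exponential family with a closed convex constraint; this gives $r_n=n^{-1/2}$ when $K^\star$ is \MTPtwo{}. This is the only point where I rely on external theory rather than a short computation, so it is where I expect the main obstacle. If needed, I would justify it through standard M-estimation consistency: the log-likelihood is strictly concave and the constraint set is closed and convex and contains $K^\star$.
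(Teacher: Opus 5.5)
Your proposal is correct and follows the paper's proof essentially step for step. Like the paper, you use the triangle inequality, the eigenvalue bound $\|\widetilde K-\widehat K\|_2\le\varepsilon\|\widehat K\|_2$ from the Loewner sandwich, $\|\cdot\|_F\le\sqrt d\,\|\cdot\|_2$, and $\|\widehat K_n\|_2\le\|K^\star\|_2+\|\widehat K_n-K^\star\|_F=O_p(1)$. The paper only cites \citet{LauritzenUhlerZwiernik2019} for the \MTPtwo{}-MLE being a valid choice and never asserts your extra $r_n=n^{-1/2}$ claim. That rate is also not a result of that reference, but the statement needs only consistency, so your argument does not depend on it.
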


\begin{proof}[Proof of \Cref{prop:gaussian_consistency}]
That $\widetilde K$ is an \MTPtwo{} positive definite $M$-matrix follows from \Cref{thm:gaussian_pullback} and \Cref{cor:mtp2_preserved_kl}. From the Loewner bound $(1-\varepsilon)\widehat K \NSD \widetilde K \NSD (1+\varepsilon)\widehat K$ in \Cref{thm:gaussian_pullback}, the eigenvalues of $\widetilde K - \widehat K$ all lie in $[-\varepsilon\lambda_{\max}(\widehat K),\,\varepsilon\lambda_{\max}(\widehat K)]$, so
\[
\|\widetilde K - \widehat K\|_F
\;\le\; \sqrt{d}\,\|\widetilde K - \widehat K\|_2
\;\le\; \varepsilon\sqrt{d}\,\|\widehat K\|_2.
\]
Combining with the triangle inequality $\|\widetilde K - K^\star\|_F \le \|\widetilde K - \widehat K\|_F + \|\widehat K - K^\star\|_F$ proves~\eqref{eq:bias_decomp_app}. For the asymptotic part, $\|\cdot\|_2\le\|\cdot\|_F$ together with the triangle inequality gives $\|\widehat K_n\|_2 \le \|\widehat K_n - K^\star\|_F + \|K^\star\|_2 = O_p(r_n) + O(1) = O_p(1)$, and substitution into~\eqref{eq:bias_decomp_app} yields $\|\widetilde K_n - K^\star\|_F \le \varepsilon_n\sqrt d\,O_p(1) + O_p(r_n) = O_p(r_n + \varepsilon_n\sqrt d)$. The two consistency regimes follow.
\end{proof}

\section{Training and out-of-sample regimes}\label{sec:regimes}

The asymmetric two-sided bound \eqref{eq:loglik_bound} of \Cref{thm:gaussian_loglik} admits a sharper interpretation in the two natural regimes of practical interest, depending on whether the evaluation covariance $S$ coincides with the training covariance used to fit $\widehat K$ or is independent of it. We make these two cases explicit below.

\textbf{Training regime.} Suppose we evaluate the log-likelihood at the same training covariance $S=S_{\mathrm{train}}$ that was used to fit $\widehat K$, and that $\widehat K$ is the \MTPtwo{}-constrained MLE on $S_{\mathrm{train}}$, that is, the minimizer of $-\log\det K+\tr(KS_{\mathrm{train}})$ subject to $K_{ij}\le 0$ for all $i\ne j$. Stationarity and complementary slackness for this constrained problem read
\begin{equation}\label{eq:kkt_mtp2}
\widehat K^{-1} - S_{\mathrm{train}} - \Lambda \;=\; 0,
\qquad
\Lambda_{ij}\,\widehat K_{ij} = 0\ \text{ for all } i\ne j,
\end{equation}
where $\Lambda$ is a symmetric matrix of multipliers with $\Lambda_{ij}\ge 0$ off the diagonal (associated with $K_{ij}\le 0$) and $\Lambda_{ii}=0$. Complementary slackness means $\Lambda$ is supported only on the non-edges of $\widehat K$.

Substituting $S_{\mathrm{train}}-\widehat K^{-1}=-\Lambda$ into the Bregman identity \eqref{eq:loglik_identity} gives
\[
\ell(\widetilde K;S_{\mathrm{train}}) - \ell(\widehat K;S_{\mathrm{train}})
\;=\;
-2\,D_{\mathrm{KL}}\!\bigl(\mathcal{N}(0,\widehat K^{-1})\,\big\|\,\mathcal{N}(0,\widetilde K^{-1})\bigr)
\;+\;\tr\!\bigl((\widetilde K-\widehat K)\Lambda\bigr).
\]
By complementary slackness, $\tr(\widehat K\Lambda)=\sum_{i\ne j}\widehat K_{ij}\Lambda_{ij}=0$, so the trace term simplifies to $\tr(\widetilde K\Lambda)=\sum_{i\ne j}\widetilde K_{ij}\Lambda_{ij}$. Since $\widetilde K$ is \MTPtwo{} by \Cref{cor:mtp2_preserved_kl}, $\widetilde K_{ij}\le 0$ for $i\ne j$, while $\Lambda_{ij}\ge 0$; hence $\tr(\widetilde K\Lambda)\le 0$. Both terms on the right-hand side are therefore nonpositive, so
\[
\ell(\widetilde K;S_{\mathrm{train}}) \;\le\; \ell(\widehat K;S_{\mathrm{train}}),
\]
with equality iff $\widetilde K=\widehat K$. On training data the sparsifier therefore cannot strictly improve on the \MTPtwo{}-MLE. When $\varepsilon\le\tfrac12$, the lower bound \eqref{eq:loglik_bound} additionally caps the loss at $d\varepsilon^2$ nats.

\textbf{Out-of-sample regime.} Now suppose $S$ is unrelated to $\widehat K$. The natural example is a held-out test covariance $S=S_{\mathrm{test}}$ computed from a sample independent of the training set. Here the residual $S-\widehat K^{-1}$ is generically nonzero, $R(S)>0$, and the upper bound \eqref{eq:loglik_upper} now permits a strict gain
\[
\ell(\widetilde K;S) \;>\; \ell(\widehat K;S).
\]
By the identity \eqref{eq:loglik_identity}, this gain materializes precisely when
\[
\tr\!\bigl((\widetilde K-\widehat K)(S-\widehat K^{-1})\bigr) \;<\; -2D_{\mathrm{KL}}(\widehat K\,\|\,\widetilde K),
\]
that is, when the perturbation $\widetilde K-\widehat K$ produced by sparsification aligns with the train--test discrepancy $S-\widehat K^{-1}$ strongly enough to overcome the KL cost of replacing $\widehat K$ by $\widetilde K$. The natural setting in which this happens is when $\widehat K$ has overfitted training noise into spurious edges that BSS subsequently drops. In that case the gain in predictive fit on test data outweighs the structural KL loss, in line with the standard bias--variance picture.

\end{document}